\documentclass[aps,prx,reprint,superscriptaddress,nofootinbib,longbibliography]{revtex4-2}
\usepackage{xcolor}
\usepackage{amsmath,amssymb,mathtools,bm}
\usepackage{amsthm}
\usepackage{graphicx}
\usepackage{microtype}
\usepackage[hidelinks]{hyperref}

\newcommand{\Hh}{\mathcal H}
\newcommand{\Gg}{\mathcal G}
\newcommand{\Ff}{\mathfrak F}
\newcommand{\Pp}{\mathfrak P}
\newcommand{\Ran}{\operatorname{Ran}}
\newcommand{\Dom}{\operatorname{Dom}}
\newcommand{\Sep}{\operatorname{Sep}}
\newcommand{\Seg}{\operatorname{Seg}}
\newcommand{\Bisep}{\operatorname{Bisep}}
\newcommand{\Tr}{\operatorname{Tr}}
\newcommand{\Part}{\operatorname{Part}}
\newcommand{\id}{\mathbb I}
\newcommand{\clconv}{\overline{\operatorname{conv}}^{\|\cdot\|_1}}
\newcommand{\supp}{\operatorname{supp}}

\newtheorem{theorem}{Theorem}
\newtheorem{lemma}{Lemma}
\newtheorem{proposition}{Proposition}
\newtheorem{corollary}{Corollary}

\begin{document}

\title{Full Inseparability and Genuine Multipartite Entanglement Coincide for Finite-Mode Gaussian States}

\author{Yu Yang}
\affiliation{School of Applied Science, Beijing Information Science and Technology University, Beijing 100192, China}

\author{Yuhang Wang}
\affiliation{School of Instrument Science and Opto-Electronics Engineering, Beijing Information Science and Technology University, Beijing 100192, China}

\author{Chunxiao Du}
\affiliation{School of Physics, Beihang University, Beijing 100191, China}

\author{Shikun Zhang}
\affiliation{School of Future Technology ,Henan University,Zhengzhou 450046,China}

\author{Zheng Qin}
\affiliation{Shenzhen Institute of Beihang University ,Henan University,Shenzhen 518063,China}
\affiliation{Jiangxi Beidouyun Intelligent Technology Co. Ltd.,Nanchang 330038,China}

\author{Rui Li}
\email{rli.work@buaa.edu.cn}
\affiliation{School of Applied Science, Beijing Information Science and Technology University, Beijing 100192, China}

\author{Wenxiu Li}
\affiliation{School of Automation (School of Artificial Intelligence), Beijing Information Science and Technology University, Beijing 100192, China}

\author{Hao Zhang}
\affiliation{School of Space and Earth Sciences, Beihang University, Beijing 100191, China}

\author{Zhisong Xiao}
\affiliation{School of Instrument Science and Opto-Electronics Engineering, Beijing Information Science and Technology University, Beijing 100192, China}
\affiliation{School of Physics, Beihang University, Beijing 100191, China}

\begin{abstract}
For general mixed states, entanglement across every bipartition need not imply genuine multipartite entanglement (GME), because a biseparable decomposition may switch the separable cut from term to term. We prove that this convex ambiguity disappears for Gaussian states of finitely many bosonic modes. More generally, for any finite family of partitions, a Gaussian density operator in the trace-norm-closed convex class generated by states separable across those partitions is already separable across one fixed partition in the family. Only the target is Gaussian; a valid decomposition may be continuous and may contain arbitrary non-Gaussian states. Thus full inseparability and GME coincide, Gaussian $k$-separability and $k$-producibility reduce to fixed-partition tests, and party-wise tensor powers cannot activate GME from a biseparable Gaussian state. The proof combines a spectral selector with a holomorphic rigidity argument that converts one product vector in the square-root range of a Gaussian state into a block-local covariance certificate. The result shows that partition mixing, a generic mixed-state mechanism, adds no new exact finite-mode Gaussian states.
\end{abstract}

\maketitle

\section{Introduction}

For general mixed multipartite states, fixed-partition inseparability and genuine multipartite entanglement (GME) are different notions. A state may be entangled across every bipartition while still being biseparable, because a convex decomposition may use different bipartitions in different terms \cite{DurCiracTarrach1999,Horodecki2009,GuhneToth2009}. The same freedom appears in $k$-separability and $k$-producibility \cite{Gabriel2010,GuhneTothBriegel2005,SorensenMolmer2001,Lu2018}. Thus, for a generic mixed state, convexification over partitions creates states that belong to no single fixed-partition separable set.

Gaussian states are a natural place to ask whether this convexification remains genuinely effective. They form the basic continuous-variable model in quantum optics and are described by finitely many first and second moments \cite{Weedbrook2012,Walschaers2021,Serafini2017}. Fixed-bipartition Gaussian separability has exact covariance characterizations \cite{WernerWolf2001,Giedke2001}; in the special $1\times1\times1$ three-mode setting, Giedke \emph{et al.} also gave the corresponding three-party covariance-domination criterion as Theorem~$2'$ of Ref.~\cite{GiedkeTripartite2001}. The question studied here is different. Gaussianity is imposed only on the \emph{target density operator}, not on a separable decomposition: the components may be arbitrary non-Gaussian states, and in infinite-dimensional Hilbert space a valid decomposition may require a continuous ensemble \cite{HolevoShirokovWerner2005}. The issue is therefore a state-space one: can mixing separability partitions create a Gaussian target that no single partition can produce?

This distinction is not captured by mixed-partition covariance relaxations. Second-moment entanglement criteria and their convex-optimization formulations provide useful practical tests \cite{HyllusEisert2006,ShchukinVanLoock2026}. The importance of combining information from different multipartite partitions has also been demonstrated experimentally in continuous-variable systems \cite{Gerke2016}. However, covariance compatibility need not coincide with state-space biseparability, even when the target itself is Gaussian \cite{Baksova2025}. Minimal second-moment criteria give efficient sufficient tests for GME \cite{Leskovjanova2025}. Most directly, a recent projection-based study found no fully inseparable biseparable states in several representative Gaussian families and conjectured that every fully inseparable finite-mode Gaussian state is genuinely multipartite entangled \cite{Leskovjanova2026}.

We prove a stronger statement. For any finite family of partitions, a Gaussian target in the trace-norm-closed convex class generated by the corresponding fixed-partition separable sets is already separable across one partition in that family. In other words, \emph{convexification over partitions creates no new Gaussian target states}. The bipartition case proves the conjectured equivalence between full inseparability (FIS) and GME. For an exact finite-mode Gaussian target, any valid demonstration of inseparability across every bipartition is therefore already an exact GME certificate; no additional witness against decompositions that mix different cuts is required. The same rigidity removes partition mixing from Gaussian $k$-separability and $k$-producibility and turns the corresponding membership problem into a finite disjunction of fixed-partition covariance feasibility tests. It also rules out party-wise multicopy GME activation from an exact biseparable Gaussian state: one fixed separable cut already exists and survives every tensor power when each party holds its local copies.

Two obstacles separate this statement from the familiar fixed-partition covariance theory. First, a continuous convex decomposition does not select any single partition. We overcome this by a \emph{one-power spectral selector}, which extracts one product vector in a subcritical spectral range of the target. Second, one must turn the existence of a single product vector in a square-root range into a global separability statement for the Gaussian density operator. A \emph{block-holomorphic contact mechanism} does this by converting product structure plus a Gaussian growth envelope into a block-local pure Gaussian covariance below the target covariance. For faithful Gaussian states, normalized positive powers remain inside the Gaussian family \cite{SeshadreesanLamiWilde2018}; below we show directly that the same interpolation also extends to nonfaithful (non-full-support) boundary points. Figure~\ref{fig:proof-roadmap} previews the global argument and the local rigidity mechanism.

The scope is finite and exact: we consider finitely many finite-mode parties and unrestricted state-space separability of exact Gaussian targets, not Gaussian-only decompositions or covariance-only mixed-partition compatibility.

\section{Convex partition rigidity}
\label{sec:mainresult}

\subsection{Definitions and scope}

Consider finitely many physical parties $A_1,\ldots,A_n$, each containing finitely many bosonic modes, with
\begin{equation}
 \Hh=\bigotimes_{j=1}^n\Hh_{A_j}.
\end{equation}
Each local bosonic Fock space is separable, and so is $\Hh$; for a quantum-optics introduction to Fock-space and phase-space descriptions, see Refs.~\cite{Weedbrook2012,Walschaers2021}. Let $[n]=\{1,\ldots,n\}$ and let $\Part([n])$ be the finite set of set partitions of $[n]$. For
\begin{equation}
 \pi=\{B_1,\ldots,B_r\},\qquad
 \Hh_{B_\alpha}=\bigotimes_{j\in B_\alpha}\Hh_{A_j},
\end{equation}
define the pure $\pi$-product projectors
\begin{equation}
 \Pp_\pi
 :=\left\{
 \left|\bigotimes_{\alpha=1}^r\phi_\alpha\right\rangle
 \left\langle\bigotimes_{\alpha=1}^r\phi_\alpha\right|:
 \|\phi_\alpha\|=1
 \right\}
 \label{eq:pure-pi-product}
\end{equation}
and the fixed-partition separable set \cite{Werner1989,HolevoShirokovWerner2005}
\begin{equation}
 \Sep_\pi:=\clconv(\Pp_\pi).
 \label{eq:fixed-partition-sep}
\end{equation}
The trace-norm closure is part of the definition; in infinite-dimensional state spaces, separable states need not admit countable pure-product decompositions \cite{HolevoShirokovWerner2005}. If $\pi$ refines $\sigma$, then
\begin{equation}
 \Sep_\pi\subseteq\Sep_\sigma.
 \label{eq:refinement-inclusion}
\end{equation}
For a nonempty family $\Ff\subseteq\Part([n])$, define
\begin{equation}
 \Sep_{\Ff}
 :=\clconv\!\left(\bigcup_{\pi\in\Ff}\Sep_\pi\right).
 \label{eq:family-sep}
\end{equation}
Because $n$ is finite, every such family $\Ff$ is finite.

We use the following continuous-variable quadrature convention, equivalent up to the usual rescaling of quadratures to conventions commonly used in Gaussian quantum information \cite{Weedbrook2012,Walschaers2021}
\begin{equation}
 R=(q_1,p_1,\ldots,q_m,p_m)^T,\qquad [R_j,R_k]=i\Omega_{jk},
\end{equation}
with
\begin{equation}
 \Omega=\bigoplus_{j=1}^m\begin{pmatrix}0&1\\-1&0\end{pmatrix},
\end{equation}
and covariance
\begin{equation}
 V_{jk}=\langle\Delta R_j\Delta R_k+\Delta R_k\Delta R_j\rangle,
 \qquad V_{\rm vac}=\id.
 \label{eq:covariance-convention}
\end{equation}
Thus physicality is $V+i\Omega\succeq0$ \cite{Robertson1929,Schrodinger1930,Weedbrook2012,Serafini2017}. A phase-space (Weyl) displacement $D(d)$ \cite{Weedbrook2012,Walschaers2021,SeshadreesanLamiWilde2018} factorizes over every partition,
\begin{equation}
 D(d)=\bigotimes_{\alpha=1}^rD_{B_\alpha}(d_\alpha).
 \label{eq:partition-local-displacement}
\end{equation}

\paragraph{Scope and sensitive hypotheses.}
Five scope conditions should be kept explicit from the outset. The number of physical parties and the number of modes per party are finite, so every allowed partition family is finite. Only the target density operator is assumed Gaussian; decomposition components may be arbitrary and non-Gaussian. Separability means the trace-norm-closed state-space notion and may involve uncountable continuous ensembles. Gaussian targets need not be faithful, meaning that the density operator need not have full support; for Gaussian states in the convention $V_{\rm vac}=\id$, faithfulness is equivalent to all symplectic eigenvalues being strictly larger than one \cite{SeshadreesanLamiWilde2018}. Finally, the statement concerns exact finite-mode states and must not be identified with covariance-only mixed-partition compatibility or with an approximate-Gaussian robustness claim. Additional conventions needed only for Gaussian powers, first moments, hierarchy measures, and tensor powers are stated where they enter.

\subsection{Main theorem}

\begin{theorem}[Convex partition rigidity]
\label{thm:main}
Let $\Gg$ be the set of Gaussian states on a fixed finite number of bosonic modes distributed among finitely many parties. For every nonempty $\Ff\subseteq\Part([n])$,
\begin{equation}
 \boxed{
 \Gg\cap\Sep_{\Ff}
 =
 \Gg\cap\bigcup_{\pi\in\Ff}\Sep_\pi .}
 \label{eq:main-theorem}
\end{equation}
\end{theorem}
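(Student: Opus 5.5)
The inclusion ``$\supseteq$'' is immediate from \eqref{eq:family-sep}, so the content is ``$\subseteq$''. Fix a Gaussian $\rho\in\Sep_{\Ff}$ with covariance $V$ and first moment $d$. By \eqref{eq:partition-local-displacement} we may conjugate by $D(-d)$, which preserves every $\Sep_\pi$ and $\Sep_\Ff$, so we assume $d=0$. The goal is then the covariance certificate known from fixed-partition theory: for some $\pi=\{B_1,\ldots,B_r\}\in\Ff$ there exist physical block covariances $\gamma_{B_\alpha}$ with $\bigoplus_\alpha\gamma_{B_\alpha}\preceq V$. Given such a certificate, $\rho$ is a Gaussian-noise mixture (a continuous displacement average) of the pure product Gaussian state with covariance $\bigoplus_\alpha\gamma_{B_\alpha}$, so $\rho\in\Sep_\pi$. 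The plan is to produce this certificate from membership in $\Sep_\Ff$.

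\emph{Step 1 (spectral selector).} Write $\rho$ as a trace-norm limit of convex combinations of pure product projectors drawn from $\bigcup_{\pi\in\Ff}\Pp_\pi$. Because $\Ff$ is finite, a pigeonhole argument on the weights shows that some fixed $\pi\in\Ff$ carries weight at least $1/|\Ff|$ along a subsequence. Split the approximants as $\rho_k=t_k\tau_k+(1-t_k)\omega_k$, with $\tau_k\in\Sep_\pi$ and $t_k\ge 1/|\Ff|$. Then pass to a weak-$*$ limit point. The defect of the limit should be handled by tightness, which the Gaussian target supplies since its moments are finite. This yields $c\,\tau\le\rho$ with $\tau\in\Sep_\pi$ nonzero. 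Any $\tau\le c^{-1}\rho$ is supported in $\Ran\rho^{1/2}$. Extracting a product vector from $\tau$ (a closed convex hull still has product vectors in its range) then gives a normalized $\psi=\bigotimes_\alpha\phi_\alpha\in\Ran\rho^{1/2}$. The abstract's ``one-power'' refinement is to do the same with $\rho^{s}$ for some $s<1$, using the normalized Gaussian powers, so that $\psi$ lies in a subcritical range $\Ran\rho^{s/2}$. This slack is what gives quantitative growth control, and it requires the extension of Gaussian powers to nonfaithful targets. Nonfaithful modes are handled by splitting off the pure (vacuum-like, after a symplectic transformation) part of the Williamson decomposition.

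\emph{Step 2 (rigidity).} Pass to a Bargmann/coherent-state representation, in which $\Ran\rho^{s/2}$ for Gaussian $\rho$ is a weighted space of entire functions whose growth envelope is governed by a quadratic form determined by $V$ (after Williamson normalization). The function $F(z)=\langle z|\psi\rangle$ then has two properties: it factorizes blockwise as $\prod_\alpha f_\alpha(z_{B_\alpha})$, and it obeys a Gaussian bound $|F(z)|\le C e^{Q_V(z)}$. Each $f_\alpha$ inherits a Gaussian growth bound along its own block variables, obtained by fixing the other blocks at points where those factors are nonzero. Applying Liouville/Hadamard-type arguments, or log-subharmonicity of $\log|f_\alpha|$, the optimal quadratic exponents of the $f_\alpha$ define block quadratic forms. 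Their direct sum is dominated by $Q_V$, and they correspond to pure Gaussian covariances $\gamma_{B_\alpha}$. Translating the domination back gives $\bigoplus_\alpha\gamma_{B_\alpha}\preceq V$, which is the certificate.

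I expect Step 2 to be the main obstacle. Block growth bounds must be turned into genuine physical covariances, meaning the pure Gaussian condition $\gamma+i\Omega\succeq0$ with equality rank. This requires showing that each factor's indicator function is an honest quadratic form rather than merely a subharmonic majorant. The subcritical slack $s<1$ is exactly what should make the limiting inequality non-strict but still valid, and that is where I would concentrate the effort.
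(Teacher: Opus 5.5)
\paragraph{Step 1 fails.} The parenthetical claim that a state in $\Sep_\pi$ has a product vector in its square-root range is false for continuous ensembles, even for Gaussian targets. Since $c\tau\le\rho$ puts $\Ran\tau^{1/2}$ inside $\Ran\rho^{1/2}$, no choice of $\tau$ can rescue it.

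Take two one-mode parties, $\Ff=\{A|B\}$, and $V=\id+t\,uu^T$ with $u=e_{q_A}+e_{q_B}$ and $t>0$. This is the vacuum averaged over equal $q$-displacements, so it lies in $\Sep_{A|B}$. For a product vector in $\Ran\rho_V^{1/2}$, the Husimi envelope reads $|f_A(z_A)f_B(z_B)|^2\le Me^{\kappa(x_A+x_B)^2}$ with $\kappa=t/(2+2t)$. Setting $x_B=-x_A$ and using three-lines convexity of $x\mapsto\log\sup_y|f(x+iy)|$ forces both logarithms to be affine. Liouville then gives $f_A\propto e^{az_A}$ and $f_B\propto e^{az_B}$ with one real $a$. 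A 50:50 beam splitter maps $|a\rangle|a\rangle$ to $|\sqrt2a\rangle|0\rangle$, and $\rho_V$ to $\sigma\otimes|0\rangle\langle0|$, where $\sigma$ has covariance $\operatorname{diag}(1+2t,1)$. In the Williamson frame of $\sigma$, this coherent state is a displaced squeezed vacuum with exactly $\tanh r'=q$, which is not in $\Ran\tau(q)^{1/2}$. Hence $\Ran\rho_V^{1/2}\cap\Seg_{A|B}=\{0\}$.

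You also have the role of $s$ reversed. Since $\Ran\rho^{1/2}\subseteq\Ran\rho^{s/2}$, the subcritical range is the \emph{weaker} target. It is reachable because $\Tr\rho^{1-s}<\infty$, and Tonelli applied to $\sum_j\lambda_j^{-s}\langle e_j|\Pi|e_j\rangle$ then places almost every component of the barycentric measure on $\Pp_\Ff$ in that range; no weak-$*$ detour is needed. The slack buys extraction, not growth control. Its price is that the vector lies in $\Ran\rho_{G,s}^{1/2}$ for a \emph{different} Gaussian state, $\rho_{G,s}=\rho^s/\Tr\rho^s$. Rigidity therefore certifies only $\rho_{G,s}\in\Sep_{\pi_s}$, with $\pi_s$ depending on $s$. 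The missing endgame is:
\begin{enumerate}
\item let $s\uparrow1$;
\item use finiteness of $\Ff$ to fix one $\pi$ along a subsequence;
\item conclude from $\rho_{G,s}\to\rho$ in trace norm and closedness of $\Sep_\pi$.
\end{enumerate}

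\paragraph{Step 2 does not yield the certificate as sketched.} Freezing the other blocks controls $f_\alpha$ only through the diagonal block $C_{\alpha\alpha}$ plus linear terms. In general $\bigoplus_\alpha C_{\alpha\alpha}$ is not $\preceq C$: in the example above, $C=\kappa uu^T$ has rank one, while $C_{AA}\oplus C_{BB}$ has rank two. A smallest admissible exponent need not be attained, and nothing in your sketch produces purity.

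The missing idea is a single \emph{joint} contact point. The function $G_\delta=|\Phi|^2e^{-r^TCr-\delta|r|^2}$ attains a positive maximum at some $r_\delta$, where every factor $f_\alpha$ is nonzero. The real Hessian $D_\delta$ of $\log|\Phi|$ at that point has three properties:
\begin{enumerate}
\item it is block diagonal, because locally $\log\Phi=\sum_\alpha h_\alpha$;
\item it anticommutes with $J$, by Cauchy--Riemann;
\item it obeys $-J^TCJ-\delta\id\preceq D_\delta\preceq C+\delta\id$, from the second-order maximum condition plus conjugation by $J$.
\end{enumerate}
Taking $\delta\downarrow0$ and setting $W=(\id+D)(\id-D)^{-1}$ gives a block-local $W$ with $W\Omega W=\Omega$, i.e.\ a pure covariance, and $W\preceq V$. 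Purity is exactly the image of $DJ+JD=0$.
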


In words, convex mixing among different separability partitions creates no new Gaussian states. For general mixed states the closed convex hull on the left can be strictly larger than the union on the right; the theorem says that the added region contains no Gaussian target state.

Let $\Ff_{\rm bip}$ be the finite family of all bipartitions of the physical parties and define the state-space biseparable class $\Bisep:=\Sep_{\Ff_{\rm bip}}$. Since a state is GME if and only if it is not in $\Bisep$, while it is fully inseparable if and only if it belongs to no fixed-bipartition separable class, Theorem~\ref{thm:main} immediately gives the title result.

\begin{corollary}[Full inseparability equals genuine multipartite entanglement]
\label{cor:fis-gme}
For every finite-mode Gaussian state,
\begin{equation}
 \boxed{
 \rho_G\text{ is fully inseparable}
 \quad\Longleftrightarrow\quad
 \rho_G\text{ is GME}.}
\end{equation}
\end{corollary}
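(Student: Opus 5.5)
The corollary follows from Theorem~\ref{thm:main} applied to the family $\Ff=\Ff_{\rm bip}$ of all bipartitions of $[n]$. All that remains is to match the two sides of \eqref{eq:main-theorem} with the two notions in the statement. I first fix the definitions used in the text. A state $\rho$ is \emph{GME} iff $\rho\notin\Bisep=\Sep_{\Ff_{\rm bip}}$. It is \emph{fully inseparable} iff $\rho\notin\Sep_\pi$ for every bipartition $\pi\in\Ff_{\rm bip}$, that is, iff $\rho\notin\bigcup_{\pi\in\Ff_{\rm bip}}\Sep_\pi$. Since $n$ is finite, $\Ff_{\rm bip}$ is a finite nonempty family whenever $n\ge2$, so the theorem applies. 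The degenerate case $n=1$ has no bipartitions, and both notions are read conventionally there (both vacuous), so I set it aside.

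For the direction ``GME $\Rightarrow$ fully inseparable'' no Gaussianity is needed. Each $\Sep_\pi$ with $\pi\in\Ff_{\rm bip}$ lies inside $\Sep_{\Ff_{\rm bip}}$ by the definition \eqref{eq:family-sep}, because a closed convex hull contains each generating set. Hence $\bigcup_{\pi}\Sep_\pi\subseteq\Bisep$. If $\rho_G$ is separable across some fixed bipartition, it is therefore biseparable. By contraposition, GME implies full inseparability.

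For the direction ``fully inseparable $\Rightarrow$ GME'', suppose $\rho_G$ is Gaussian and not GME, so $\rho_G\in\Gg\cap\Sep_{\Ff_{\rm bip}}$. Theorem~\ref{thm:main} with $\Ff=\Ff_{\rm bip}$ then gives $\rho_G\in\Gg\cap\bigcup_{\pi\in\Ff_{\rm bip}}\Sep_\pi$. So there is a single bipartition $\pi$ with $\rho_G\in\Sep_\pi$, which means $\rho_G$ is not fully inseparable. Contraposition gives the claim.

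There is no substantive obstacle at this level, because the entire difficulty sits in Theorem~\ref{thm:main}: the spectral selector and the holomorphic rigidity argument. The only point needing care is that the definitions match exactly. Biseparability must be the trace-norm-closed convex hull over all bipartitions, so that it coincides with $\Sep_{\Ff_{\rm bip}}$ and continuous or non-Gaussian decompositions are allowed. Full inseparability must refer to the same closed separable sets $\Sep_\pi$ as defined in \eqref{eq:fixed-partition-sep}. Both match the conventions fixed in the scope paragraph, so the corollary follows directly.
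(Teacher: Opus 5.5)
Your proposal is correct and is exactly the paper's argument. Specialize Theorem~\ref{thm:main} to $\Ff=\Ff_{\rm bip}$, identify GME with non-membership in $\Bisep=\Sep_{\Ff_{\rm bip}}$, and identify full inseparability with non-membership in $\bigcup_{\pi\in\Ff_{\rm bip}}\Sep_\pi$; your explicit treatment of the easy inclusion and of the degenerate case $n=1$ (where both notions hold vacuously for every state) only spells out what the paper leaves implicit.
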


Figure~\ref{fig:convex-partition-rigidity} gives the geometric picture behind the theorem.

\begin{figure*}[t]
  \centering
  \includegraphics[width=\textwidth]{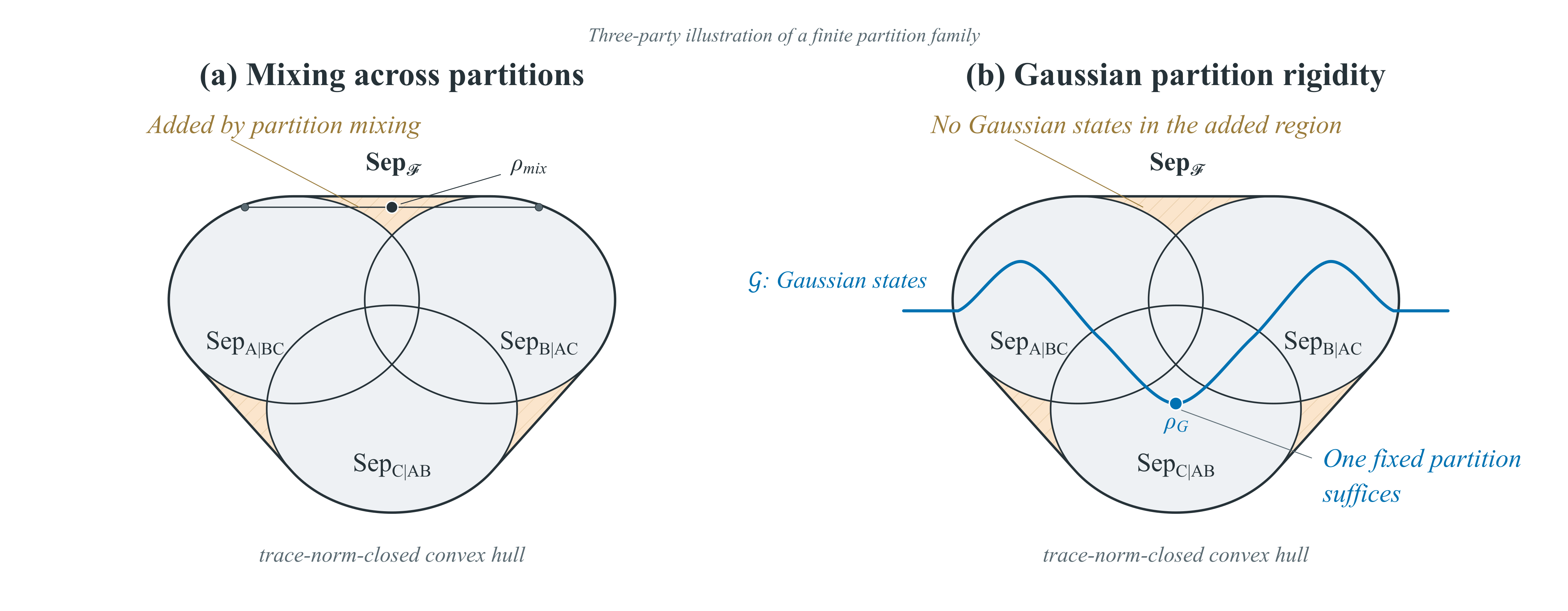}
  \caption{Convex partition rigidity for a three-party illustration with
  $\Ff=\{A|BC,B|AC,C|AB\}$. Mixing states that are separable across different cuts enlarges the unrestricted mixed-state class. Theorem~\ref{thm:main} says that the added region has empty intersection with the finite-mode Gaussian state set: every Gaussian member of $\Sep_{\Ff}$ already belongs to one fixed-partition separable set. The picture is schematic and the theorem holds for any finite family of partitions of finitely many finite-mode parties.}
  \label{fig:convex-partition-rigidity}
\end{figure*}

\begin{proof}
The inclusion from right to left is immediate from Eq.~\eqref{eq:family-sep}. For the converse, take $\rho_G\in\Gg\cap\Sep_{\Ff}$ and any sequence $0<s_k<1$ with $s_k\uparrow1$. Lemma~\ref{lem:Gaussian-powers} gives $\Tr\rho_G^{1-s_k}<\infty$. Lemma~\ref{lem:one-power-selector}, applied to $\rho_G$ with exponent $s_k$, therefore yields $\pi_k\in\Ff$ and a nonzero $\pi_k$-product vector
\begin{equation}
 \phi_k\in\Ran\rho_G^{s_k/2}.
\end{equation}
The normalized power $\rho_{G,s_k}=\rho_G^{s_k}/\Tr\rho_G^{s_k}$ is Gaussian and satisfies
\begin{equation}
 \Ran\rho_{G,s_k}^{1/2}=\Ran\rho_G^{s_k/2}
\end{equation}
by Lemma~\ref{lem:Gaussian-powers}. Theorem~\ref{thm:product-range} thus gives $\rho_{G,s_k}\in\Sep_{\pi_k}$. Since $\Ff$ is finite, some partition $\pi\in\Ff$ occurs along an infinite subsequence, say $\pi_{k_j}=\pi$. Along the same subsequence, Lemma~\ref{lem:Gaussian-powers} gives $\rho_{G,s_{k_j}}\to\rho_G$ in trace norm. Because $\Sep_\pi$ is trace-norm closed, $\rho_G\in\Sep_\pi$, proving Eq.~\eqref{eq:main-theorem}.
\end{proof}

\section{From convex decompositions to spectral ranges}
\label{sec:bridges}

\begin{figure*}[t]
  \centering
  \includegraphics[width=\textwidth]{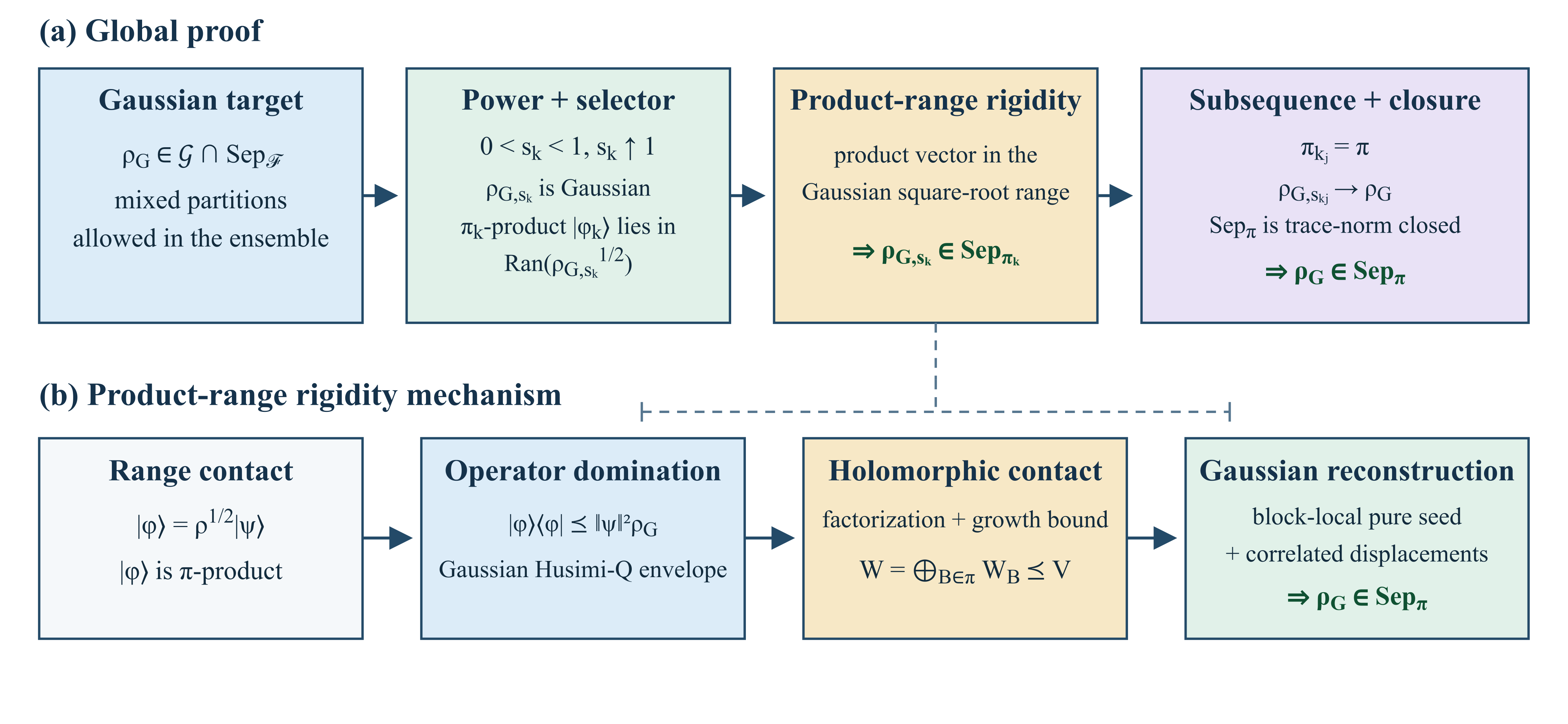}
  \caption{Proof roadmap for Theorem~\ref{thm:main}. (a) Gaussian powers and the spectral selector produce a product vector in a square-root range. Product-range rigidity fixes a separable partition for each power; finiteness of $\Ff$ and trace-norm closure then fix one partition for the target. (b) The local rigidity mechanism: operator domination gives a Gaussian Husimi-$Q$ envelope, holomorphic product contact yields a block-local pure covariance $W\preceq V$, and correlated displacements reconstruct a state separable across the same partition.}
  \label{fig:proof-roadmap}
\end{figure*}

\subsection{Spectral extraction of a product vector}

The main theorem starts from a convex decomposition that may mix several partitions, so no fixed cut is available a priori. We therefore look for a weaker object that can survive this mixing: one product vector in a spectral range of the target. The exponent $s<1$ is precisely what makes the extraction integrable; Gaussian powers will then turn this subcritical range into the square-root range required by product-range rigidity.

For a finite family $\Ff$, write
\begin{equation}
 \Pp_\Ff:=\bigcup_{\pi\in\Ff}\Pp_\pi.
 \label{eq:PF}
\end{equation}
Each $\Pp_\pi$ is trace-norm closed: if pure $\pi$-product projectors converge in trace norm, purity is continuous and partial trace is trace-norm contractive, so the limit is pure and every block marginal is pure, hence the limit is again $\pi$-product. The finite union $\Pp_\Ff$ is therefore closed. Since $\Sep_\Ff=\clconv(\Pp_\Ff)$, Lemma~1 of Holevo, Shirokov, and Werner \cite{HolevoShirokovWerner2005} gives a Borel probability measure $\nu$ supported on $\Pp_\Ff$ such that
\begin{equation}
 \rho=\int_{\Pp_\Ff}\Pi\,\nu(d\Pi)
 \label{eq:global-ensemble}
\end{equation}
for every $\rho\in\Sep_\Ff$, with the integral understood in the trace-class (Bochner) sense used for continuous quantum-state ensembles in Ref.~\cite{HolevoShirokovWerner2005}.

\begin{lemma}[One-power spectral selector]
\label{lem:one-power-selector}
Let $\rho\in\Sep_\Ff$ and $s\in(0,1)$ satisfy
\begin{equation}
 \Tr\rho^{1-s}<\infty.
 \label{eq:one-power-trace-assumption}
\end{equation}
Then there exist $\pi\in\Ff$ and a normalized $\pi$-product vector $|\phi\rangle$ such that
\begin{equation}
 |\phi\rangle\in\Ran\rho^{s/2}.
 \label{eq:one-powered-range}
\end{equation}
\end{lemma}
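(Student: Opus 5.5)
The plan is to disintegrate $\rho$ along the ensemble \eqref{eq:global-ensemble} and show that $\nu$-almost every pure component already lies in $\Ran\rho^{s/2}$. Since $\nu$ is a probability measure supported on $\Pp_\Ff=\bigcup_{\pi\in\Ff}\Pp_\pi$, at least one such component exists. It is a pure $\pi$-product projector $|\phi\rangle\langle\phi|$ for some $\pi\in\Ff$, and $|\phi\rangle$ is the required vector. The exponent $s<1$ enters only through the integrability hypothesis \eqref{eq:one-power-trace-assumption}.

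\emph{Step 1: spectral setup.} Since $\rho$ is trace class and positive, write $\rho=\sum_i\lambda_i|e_i\rangle\langle e_i|$ with $\lambda_i>0$ and $\{e_i\}$ orthonormal. Let $P_0$ be the projector onto $\ker\rho$. Then $\Ran\rho^{s/2}$ consists exactly of the vectors $\psi\perp\ker\rho$ satisfying
\begin{equation}
 \sum_i\lambda_i^{-s}|\langle e_i|\psi\rangle|^2<\infty .
\end{equation}
For such $\psi$ we have $\psi=\rho^{s/2}\chi$ with $\chi=\sum_i\lambda_i^{-s/2}\langle e_i|\psi\rangle e_i$.

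\emph{Step 2: kernel component.} Pairing the Bochner integral \eqref{eq:global-ensemble} with the bounded operator $P_0$ gives
\begin{equation}
 0=\Tr(P_0\rho)=\int\langle\psi|P_0|\psi\rangle\,\nu(d\Pi),
\end{equation}
where $\Pi=|\psi\rangle\langle\psi|$. This uses the fact that $X\mapsto\Tr(BX)$ is a continuous linear functional for bounded $B$, so it commutes with the Bochner integral. The integrand is nonnegative, so $\psi\perp\ker\rho$ for $\nu$-a.e.\ $\Pi$. The integrand depends only on $\Pi$, not on the phase of $\psi$, so measurability is not an issue.

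\emph{Step 3: truncation and monotone convergence.} Take the bounded truncations $f_N=\sum_{i\le N}\lambda_i^{-s}|e_i\rangle\langle e_i|$. As in Step 2,
\begin{equation}
 \int\langle\psi|f_N|\psi\rangle\,\nu(d\Pi)=\Tr(f_N\rho)=\sum_{i\le N}\lambda_i^{1-s}.
\end{equation}
The integrands $\Pi\mapsto\Tr(f_N\Pi)$ are continuous in trace norm and increase in $N$. Monotone convergence therefore yields
\begin{equation}
 \int\sum_i\lambda_i^{-s}|\langle e_i|\psi\rangle|^2\,\nu(d\Pi)=\Tr\rho^{1-s}<\infty .
\end{equation}
Hence the integrand is finite $\nu$-a.e.

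\emph{Step 4: choosing the vector.} Intersecting the two full-measure sets from Steps 2 and 3 with $\supp\nu\subseteq\Pp_\Ff$ gives a nonempty set, since $\nu(\Pp_\Ff)=1$. Any $\Pi=|\phi\rangle\langle\phi|$ in this set lies in some $\Pp_\pi$ with $\pi\in\Ff$. By Step 1, $|\phi\rangle\in\Ran\rho^{s/2}$, which proves \eqref{eq:one-powered-range}.

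The main obstacle is making the interchange of trace and integral rigorous for the unbounded operator $\rho^{-s}$. This is why I pass through the bounded truncations $f_N$ and monotone convergence rather than writing $\Tr(\rho^{-s}\rho)$ directly. It is also why the kernel of $\rho$ must be handled separately: a nonfaithful target is allowed, so the $\lambda_i$ need not exhaust a basis.
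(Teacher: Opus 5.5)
Your proposal is correct and follows essentially the same route as the paper. You take the barycentric ensemble for $\rho$, discard the kernel via $\Tr(P_0\rho)=0$, and integrate $\sum_j\lambda_j^{-s}\langle e_j|\Pi|e_j\rangle$ against $\nu$ to obtain $\Tr\rho^{1-s}<\infty$, so $\nu$-almost every product component lies in $\Ran\rho^{s/2}$. Your bounded truncations $f_N$ with monotone convergence are just a spelled-out version of the paper's Tonelli interchange.
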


\begin{proof}
Use Eq.~\eqref{eq:global-ensemble} and write the spectral decomposition on $\supp\rho$ as
\begin{equation}
 \rho=\sum_j\lambda_j|e_j\rangle\langle e_j|,\qquad \lambda_j>0.
 \label{eq:spectral-rho-bridge}
\end{equation}
Let $P_0$ project onto $\ker\rho$. Since $P_0$ is bounded,
\begin{equation}
 \int\Tr(P_0\Pi)\,\nu(d\Pi)=\Tr(P_0\rho)=0,
\end{equation}
so almost every pure component lies in $\supp\rho$. All terms in the next expression are nonnegative; Tonelli's theorem therefore gives
\begin{align}
 &\int_{\Pp_\Ff}\sum_j\lambda_j^{-s}\langle e_j|\Pi|e_j\rangle\,\nu(d\Pi)\notag\\
 &\qquad=\sum_j\lambda_j^{-s}\langle e_j|\rho|e_j\rangle
 =\sum_j\lambda_j^{1-s}
 =\Tr\rho^{1-s}<\infty.
 \label{eq:spectral-bridge-identity}
\end{align}
Hence for $\nu$-almost every $\Pi=|\phi\rangle\langle\phi|$, writing $|\phi\rangle=\sum_jc_j|e_j\rangle$, one has
\begin{equation}
 \sum_j|c_j|^2\lambda_j^{-s}<\infty.
\end{equation}
Equivalently, $c_j=\lambda_j^{s/2}d_j$ for some $(d_j)\in\ell^2$, so
\begin{equation}
 |\phi\rangle\in\Ran\rho^{s/2}.
 \label{eq:spectral-range-test}
\end{equation}
Because this full-measure set is nonempty, choose any projector from it. Its membership in $\Pp_\Ff$ means that it is product across at least one $\pi\in\Ff$.
\end{proof}

The new content of the selector is not spectral calculus itself; it is the conversion of an unrestricted continuous separable ensemble into the existence of a product vector in a single subcritical power range.

\subsection{Gaussian powers as the interpolation step}

Normalized powers are not an auxiliary regularization here. They are the exact interpolation that turns the selector's subcritical range into the square-root range of another Gaussian state while keeping that state inside the Gaussian family. Taking $s\uparrow1$ then returns to the original target.

For faithful Gaussian states, normalized operator powers admit standard Gaussian formulas \cite{SeshadreesanLamiWilde2018}. The following form, proved directly from the Gaussian normal-mode decomposition, also includes nonfaithful boundary points.

\begin{lemma}[Gaussian powers and continuity]
\label{lem:Gaussian-powers}
Let $\rho_G$ be a finite-mode Gaussian state, possibly nonfaithful. For every $s>0$,
\begin{equation}
 \rho_{G,s}:=\frac{\rho_G^s}{\Tr\rho_G^s}
 \label{eq:normalized-power}
\end{equation}
is Gaussian, $\Tr\rho_G^s<\infty$, and
\begin{equation}
 \rho_{G,s}\longrightarrow\rho_G
 \quad\text{in trace norm as }s\uparrow1.
 \label{eq:power-continuity}
\end{equation}
Moreover,
\begin{equation}
 \Ran\rho_{G,s}^{1/2}=\Ran\rho_G^{s/2}.
 \label{eq:power-range-equality}
\end{equation}
\end{lemma}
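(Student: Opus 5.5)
The plan is to reduce to the single-mode normal form. We use the standard Williamson/normal-mode decomposition for a possibly nonfaithful Gaussian state. There is a Gaussian unitary $U=D(d)U_S$, with $U_S$ a symplectic (metaplectic) unitary and $D(d)$ a displacement, such that
\begin{equation}
 \rho_G=U\Bigl(\bigotimes_{k=1}^m\tau_{\nu_k}\Bigr)U^\dagger .
\end{equation}
Here $\nu_k\ge1$ are the symplectic eigenvalues, and $\tau_\nu$ is the single-mode thermal state with covariance $\nu\id$. Explicitly, $\tau_\nu=(1-x)\sum_{n\ge0}x^n|n\rangle\langle n|$ with $x=(\nu-1)/(\nu+1)\in[0,1)$, and $\tau_1=|0\rangle\langle0|$ is the vacuum. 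Because $U$ is unitary, functional calculus commutes with conjugation:
\begin{equation}
 \rho_G^s=U\Bigl(\bigotimes_k\tau_{\nu_k}^s\Bigr)U^\dagger ,\qquad \Ran\rho_G^{s/2}=U\,\Ran\Bigl(\bigotimes_k\tau_{\nu_k}^{s/2}\Bigr).
\end{equation}
So every claim reduces to a statement about single-mode thermal or vacuum factors, and the nonfaithful modes need no separate limiting argument.

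\emph{Step 1 (powers are Gaussian and trace-class).} For $x>0$ we have $\tau_\nu^s=(1-x)^s\sum_n (x^s)^n|n\rangle\langle n|$, so $\Tr\tau_\nu^s=(1-x)^s/(1-x^s)<\infty$ for every $s>0$. After normalization this is the thermal state with parameter $x^s$, i.e.\ symplectic eigenvalue $\nu(s)=(1+x^s)/(1-x^s)$. For $x=0$ the power is the vacuum projector itself, with trace $1$. Taking tensor products and conjugating by $U$ shows that $\rho_{G,s}$ is Gaussian, with covariance $S(\bigoplus_k\nu_k(s)\id_2)S^T$ and the same displacement $d$, and that $\Tr\rho_G^s=\prod_k\Tr\tau_{\nu_k}^s<\infty$.

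\emph{Step 2 (range identity).} The operators $\rho_{G,s}^{1/2}$ and $\rho_G^{s/2}$ differ only by the positive scalar $(\Tr\rho_G^s)^{-1/2}$. Hence their ranges coincide, which is Eq.~\eqref{eq:power-range-equality}.

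\emph{Step 3 (trace-norm continuity as $s\uparrow1$).} This is the only analytic point, and I expect it to be the main, though mild, obstacle. Unitary invariance of the trace norm reduces it to $\bigotimes_k\tau_{\nu_k,s}\to\bigotimes_k\tau_{\nu_k}$. By the telescoping bound
\begin{equation}
 \|\otimes_k\alpha_k-\otimes_k\beta_k\|_1\le\sum_k\|\alpha_k-\beta_k\|_1 ,
\end{equation}
valid for density operators, it suffices to treat one mode. For $x=0$ the normalized power is constant and equal to the vacuum. For $x>0$, both states are diagonal in the Fock basis, so the trace distance is the $\ell^1$ distance between the geometric distributions $(1-x^s)x^{sn}$ and $(1-x)x^n$. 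Pointwise convergence as $s\to1$ holds, and by Scheffé's lemma (both sequences are probability distributions) it upgrades to $\ell^1$ convergence. Alternatively, one can dominate the terms by $x^{n/2}$ for $s\ge1/2$ and apply dominated convergence.

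One subtlety deserves care: the normal-mode decomposition with some $\nu_k=1$. This is the standard Williamson theorem applied to $V\succeq$ physical; it needs no faithfulness, since symplectic diagonalization only requires $V>0$, which holds because $V+i\Omega\succeq0$ forces $V\succ0$. I will cite this and note that the vacuum factors are exactly the kernel-producing factors, handled trivially above.
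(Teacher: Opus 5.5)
Your proposal is correct and follows essentially the paper's route. It uses the Williamson normal form with possibly pure (vacuum) modes, the identity that the normalized $s$-th power of a thermal factor with parameter $q$ is the thermal factor with parameter $q^s$, and the positive-scalar argument for the range equality. The only difference is in the continuity step: you reduce mode by mode via the telescoping tensor-product bound and Scheffé's lemma, whereas the paper applies dominated convergence once to the full eigenvalue list of $\rho_G$, dominating by $\lambda_\ell^{s_0}$, and then normalizes. Both versions are valid.
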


\begin{proof}
The standard Gaussian normal-mode (Williamson) decomposition \cite{Weedbrook2012,Walschaers2021,SeshadreesanLamiWilde2018,Serafini2017} gives
\begin{equation}
 \rho_G=U_G\left[\bigotimes_{j=1}^m\tau(q_j)\right]U_G^\dagger,
 \qquad q_j\in[0,1),
 \label{eq:Williamson-density}
\end{equation}
where
\begin{equation}
 \tau(q)=(1-q)\sum_{n\ge0}q^n|n\rangle\langle n|\quad(q>0),
 \qquad \tau(0)=|0\rangle\langle0|.
\end{equation}
Direct functional calculus gives
\begin{equation}
 \frac{\tau(q)^s}{\Tr\tau(q)^s}=\tau(q^s),
 \label{eq:thermal-power}
\end{equation}
including $q=0$. Therefore
\begin{equation}
 \rho_{G,s}=U_G\left[\bigotimes_j\tau(q_j^s)\right]U_G^\dagger
 \label{eq:Gaussian-power-form}
\end{equation}
is Gaussian, while
\begin{equation}
 \Tr\rho_G^s
 =\prod_{j:q_j>0}\frac{(1-q_j)^s}{1-q_j^s}<\infty.
 \label{eq:Gaussian-power-trace}
\end{equation}
Thus pure Williamson modes cause no divergence and require no faithfulness-restoring noise.

For trace-norm continuity, let $\{\lambda_\ell\}$ be the nonzero eigenvalues of $\rho_G$ and fix $s_0\in(0,1)$. Equation~\eqref{eq:Gaussian-power-trace} gives $\sum_\ell\lambda_\ell^{s_0}<\infty$. For $s\in[s_0,1]$, both $\lambda_\ell^s$ and $|\lambda_\ell^s-\lambda_\ell|$ are dominated by summable multiples of $\lambda_\ell^{s_0}$, so dominated convergence yields $\Tr\rho_G^s\to1$ and $\|\rho_G^s-\rho_G\|_1\to0$. Normalization then gives Eq.~\eqref{eq:power-continuity}. Finally, multiplication by the positive scalar $(\Tr\rho_G^s)^{-1/2}$ does not change an operator range, proving Eq.~\eqref{eq:power-range-equality}.
\end{proof}

\section{Multipartition product-range rigidity}

For $\pi=\{B_1,\ldots,B_r\}$, define the set of (not necessarily normalized) $\pi$-product vectors
\begin{equation}
 \Seg_\pi:=\left\{\bigotimes_{\alpha=1}^r\phi_\alpha:\phi_\alpha\in\Hh_{B_\alpha}\right\}.
\end{equation}
This set is closed under multiplication by complex scalars; no convexity is implied by the notation.

\begin{theorem}[Multipartition product-range rigidity]
\label{thm:product-range}
Let $\rho_G$ be any finite-mode Gaussian state, possibly nonfaithful, and let $\pi$ be any partition of its physical parties. Then
\begin{equation}
 \Ran\rho_G^{1/2}\cap\Seg_\pi\neq\{0\}
 \quad\Longrightarrow\quad
 \rho_G\in\Sep_\pi.
 \label{eq:product-range-rigidity}
\end{equation}
\end{theorem}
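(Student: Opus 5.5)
The plan follows the roadmap of Fig.~\ref{fig:proof-roadmap}(b). A vector in $\Ran\rho_G^{1/2}$ gives an operator domination. A Husimi-type transform turns that domination into a holomorphic function with Gaussian growth. Product structure across $\pi$ then forces a block-local pure Gaussian covariance $W\preceq V$. Finally, correlated displacements rebuild $\rho_G$ as a mixture of $\pi$-product states.

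\emph{Step 1 (domination).} Let $0\neq\psi=\bigotimes_\alpha\phi_\alpha\in\Ran\rho_G^{1/2}\cap\Seg_\pi$. By the Douglas range lemma there is $c>0$ with
\begin{equation}
 |\psi\rangle\langle\psi|\preceq c\,\rho_G .
\end{equation}
Write $\rho_G=U_G[\bigotimes_j\tau(q_j)]U_G^\dagger$ as in Lemma~\ref{lem:Gaussian-powers}. The support of $\rho_G$ is $U_G$ applied to the tensor product of the vacuum in the pure Williamson modes ($q_j=0$) with full Fock spaces in the others. Hence we may first restrict to this support, which amounts to a Gaussian constraint on $\psi$.

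\emph{Step 2 (growth envelope).} Test the domination on coherent states $|\alpha\rangle$:
\begin{equation}
 |\langle\alpha|\psi\rangle|^2\le c\,\langle\alpha|\rho_G|\alpha\rangle=c\,Q_{\rho_G}(\alpha).
\end{equation}
Here $Q_{\rho_G}$ is an explicit Gaussian with covariance $V+\id$, possibly degenerate. In Bargmann form, $F(z)=e^{|z|^2/2}\langle\bar z|\psi\rangle$ is entire on $\mathbb C^m$ and satisfies
\begin{equation}
 |F(z)|\le C\exp\bigl(\tfrac12\langle z,Mz\rangle\bigr),
\end{equation}
where $M$ is a real quadratic form on $\mathbb R^{2m}$ fixed by $V$. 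Moreover $F=\prod_\alpha F_\alpha(z_{B_\alpha})$, since the coherent states factorize over $\pi$ exactly as $D(d)$ does in Eq.~\eqref{eq:partition-local-displacement}.

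\emph{Step 3 (holomorphic rigidity; main obstacle).} We need to show that each factor $F_\alpha$ obeys its own Gaussian envelope. The goal is quadratic forms $M_\alpha$ on the block variables whose sum $\bigoplus_\alpha M_\alpha$ is dominated by $M$. Suitably chosen Gaussian quadratic parts $e^{\frac12 z_{B_\alpha}^TA_\alpha z_{B_\alpha}}$ should then correspond to block-local pure Gaussian states. Assembled, these give a covariance $W=\bigoplus_\alpha W_\alpha$ with $W+i\Omega\succeq0$, $W$ pure, and $W\preceq V$.

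First attempt: fix the other blocks at a point $z^0$ where their factors are nonzero. This gives $|F_\alpha(z_{B_\alpha})|\le C'\exp(\text{quadratic in }z_{B_\alpha}\text{ with linear cross terms})$. Next apply Hadamard/Liouville-type theorems for entire functions of finite order 2 on $\mathbb C^{k}$, or a Phragmén–Lindelöf argument along complex lines, to bound the indicator of $\log|F_\alpha|$. Then take a lower quadratic envelope. The delicate point is the joint constraint: the block envelopes must be compatible, so that the mixed terms of $M$ can be absorbed. I would use the log-plurisubharmonicity of $\log|F|=\sum_\alpha\log|F_\alpha|$ and compare growth rates along directions in each block separately. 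The "contact" is where equality in Step 2 is nearly attained.

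Translating "Gaussian growth bound $\Rightarrow$ covariance inequality" uses the fact that a pure Gaussian whose Bargmann exponent is dominated by the envelope of $Q_{\rho_G}$ has $W\preceq V$. This is the reverse of the standard correspondence between Bargmann quadratic forms and pure Gaussian covariances. Degenerate (nonfaithful) directions are handled by the Step 1 reduction.

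\emph{Step 4 (reconstruction).} Given $W=\bigoplus_\alpha W_\alpha\preceq V$, use the standard decomposition
\begin{equation}
 \rho_G=\int D(d)\,\rho_W\,D(d)^\dagger\,\mu_{V-W}(dd).
\end{equation}
Here $\mu_{V-W}$ is a possibly degenerate Gaussian measure centered at the first moment of $\rho_G$, and $\rho_W=\bigotimes_\alpha\rho_{W_\alpha}$ is a $\pi$-product pure Gaussian state. Each displaced $\rho_W$ lies in $\Pp_\pi$ by Eq.~\eqref{eq:partition-local-displacement}. The Bochner integral of such projectors lies in $\Sep_\pi=\clconv(\Pp_\pi)$, so $\rho_G\in\Sep_\pi$.
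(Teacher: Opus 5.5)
\textbf{There is a genuine gap in Step 3.} Your Steps 1, 2 and 4 essentially coincide with the paper's argument: the domination $|\phi\rangle\langle\phi|\preceq\|\psi\|^2\rho_G$, the coherent-state envelope $|F|^2\le M e^{r^TCr}$ with $C=\id-2(V+\id)^{-1}$, and the correlated-displacement reconstruction. Step 3 carries all the content, and you leave it as a plan.

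What must be produced is a real symmetric block-diagonal $D=\bigoplus_\alpha D_\alpha$ with two properties:
\begin{itemize}
\item $DJ+JD=0$, so that $r^TDr=\operatorname{Re}(z^TAz)$ is the exponent of a block-local pure Gaussian Bargmann function;
\item $D\preceq C$ \emph{jointly}, against the full matrix $C$.
\end{itemize}
Given such a $D$, $W=(\id+D)(\id-D)^{-1}$ is a block-local pure covariance, and operator monotonicity of inversion gives $W\preceq V$. That is the correspondence you allude to, and it is fine.

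Your slicing step cannot deliver $D$. Freezing the other blocks at $z^0$ gives envelopes for $F_\alpha$ whose quadratic part is the diagonal block $C_{\alpha\alpha}$. But $D_\alpha\preceq C_{\alpha\alpha}$ for every $\alpha$ does not imply $\bigoplus_\alpha D_\alpha\preceq C$ when $C$ has off-diagonal blocks. Those off-diagonal blocks are exactly the cross-block correlations the theorem has to control.

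The global growth tools you list do not close this either. The indicator of an order-two factor is in general not a quadratic form; for example, $\cosh(\epsilon z^2)$ has indicator $\epsilon|\cos2\theta|$. Nothing in your sketch makes a quadratic minorant of it anticommute with $J$ and lie below the full $C$. You name the compatibility problem but do not resolve it.

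\textbf{The missing idea is a local second-order contact argument applied to $\log|F|$ itself, not to the factors separately.}
\begin{itemize}
\item For $\delta>0$, the penalized function $G_\delta(r)=|F(z(r))|^2e^{-r^TCr-\delta|r|^2}$ is continuous, not identically zero, and tends to zero at infinity. So it attains a positive maximum at some $r_\delta$.
\item At $r_\delta$ every factor $F_\alpha$ is nonzero. Locally, therefore, $\log|F|=\sum_\alpha\operatorname{Re}h_\alpha(z_{B_\alpha})$ with each $h_\alpha$ holomorphic.
\item The real Hessian $D_\delta$ of $\log|F|$ at $r_\delta$ is then automatically block diagonal, since there are no mixed derivatives between blocks. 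It also satisfies $D_\delta J+JD_\delta=0$ by Cauchy--Riemann.
\item Maximality of $\log G_\delta$ gives $D_\delta\preceq C+\delta\id$ against the full $C$, off-diagonal blocks included.
\item Conjugating by $J$ gives $D_\delta\succeq-J^TCJ-\delta\id$. The $D_\delta$ are therefore bounded, and a limit along $\delta\downarrow0$ yields the required $D$.
\end{itemize}
This needs no global information about the individual factors.

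A minor correction: $V+\id\succ\id$ for every physical $V$, so the Husimi Gaussian is never degenerate. The restriction to $\supp\rho_G$ in Step~1 is unnecessary, and nonfaithfulness plays no role anywhere in the argument.
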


For a generic mixed state, the presence of a single product vector in $\Ran\rho^{1/2}$ has no comparable global implication. The content of Theorem~\ref{thm:product-range} is that Gaussian analyticity makes this local range contact rigid: one product vector forces separability of the entire Gaussian state across the same partition.

\subsection{Coherent-state reduction and Gaussian holomorphic majorant}

First moments can be removed without loss of generality. If $U=D(d)=\bigotimes_\alpha D_{B_\alpha}(d_\alpha)$ centers $\rho_G$, then functional calculus gives $(U^\dagger\rho_GU)^{1/2}=U^\dagger\rho_G^{1/2}U$; hence both the product-range hypothesis and $\pi$-separability are preserved. We may therefore assume that $\rho_G$ is centered and that complete mode pairs are ordered block by block according to $\pi$.

Suppose
\begin{equation}
 0\neq|\phi\rangle=\rho_G^{1/2}|\psi\rangle\in\Seg_\pi.
\end{equation}
For every $|x\rangle$,
\begin{equation}
 |\langle x|\phi\rangle|^2
 =|\langle\rho_G^{1/2}x|\psi\rangle|^2
 \le\|\psi\|^2\langle x|\rho_G|x\rangle,
\end{equation}
so
\begin{equation}
 |\phi\rangle\langle\phi|\preceq\|\psi\|^2\rho_G.
 \label{eq:rank-one-domination}
\end{equation}
This is also the elementary operator-domination direction of Douglas' range-factorization theorem \cite{Douglas1966}; no inverse of $\rho_G$ is used.

For normalized multimode coherent states \cite{Glauber1963,Weedbrook2012,Walschaers2021}, use $a_j=(q_j+ip_j)/\sqrt2$, write $z_j=x_j+iy_j$, and set
\begin{equation}
 r(z)=(x_1,y_1,\ldots,x_m,y_m)^T,
 \qquad J=-\Omega.
\end{equation}
Then $\langle R\rangle_{|z\rangle}=\sqrt2\,r(z)$ and $|\langle0|z\rangle|^2=e^{-|z|^2}$ \cite{Glauber1963}. The corresponding Husimi-$Q$ diagonal---the coherent-state overlap $\langle z|\rho_G|z\rangle$ up to the conventional phase-space normalization---is Gaussian \cite{Husimi1940,Weedbrook2012,Walschaers2021,Serafini2017}; in the convention $V_{\rm vac}=\id$ it is
\begin{align}
 \langle z|\rho_G|z\rangle
 &=c_V\exp\!\left[-2r(z)^T(V+\id)^{-1}r(z)\right],\notag\\
 c_V&=\frac{2^m}{\sqrt{\det(V+\id)}}.
 \label{eq:husimi-diagonal}
\end{align}
The Gaussian-stripped amplitude
\begin{equation}
 \Phi(z):=e^{|z|^2/2}\langle\phi|z\rangle
 \label{eq:stripped-amplitude}
\end{equation}
is entire in the Segal--Bargmann (holomorphic) representation of bosonic quantum states \cite{ChabaudMehraban2022}. Since $|\phi\rangle$ is product over $\pi$ and coherent states factorize,
\begin{equation}
 \Phi(z)=\prod_{\alpha=1}^r f_\alpha(z_{B_\alpha}).
 \label{eq:coherent-product-factorization}
\end{equation}
Equations~\eqref{eq:rank-one-domination} and \eqref{eq:husimi-diagonal} give
\begin{equation}
 |\Phi(z)|^2\le M\exp\!\left[r(z)^TCr(z)\right],
 \qquad M=\|\psi\|^2c_V,
 \label{eq:Gaussian-envelope}
\end{equation}
where
\begin{equation}
 C=(V-\id)(V+\id)^{-1}
 =\id-2(V+\id)^{-1}.
 \label{eq:C-bounds}
\end{equation}
Because $V>0$, $C=C^T$ and
\begin{equation}
 -\id\prec C\prec\id.
\end{equation}

At this point the operator problem has become a finite-dimensional growth problem for an entire function. Operator domination supplies the Gaussian quadratic envelope, while the product vector supplies block factorization of the Bargmann amplitude. The next lemma shows that these two facts force the quadratic envelope to contain a block-local pure Gaussian envelope. Its matrix $D$ parametrizes the pure covariance reconstructed below.

\subsection{Block-holomorphic contact lemma}

\begin{lemma}[Block-holomorphic contact lemma]
\label{lem:block-contact}
Let $\mathbb R^{2m}=\bigoplus_{\alpha=1}^r\mathbb R^{2m_\alpha}$ be a decomposition into $J$-invariant mode blocks, with $J=-\Omega$ block diagonal in the same decomposition. Let $C=C^T$ satisfy $-\id\prec C\prec\id$. Let $\Phi$ be a nonzero entire function on $\mathbb C^m$ that factorizes as
\begin{equation}
 \Phi(z)=\prod_{\alpha=1}^r f_\alpha(z_{B_\alpha})
 \end{equation}
and obeys, for some $M>0$,
\begin{equation}
 |\Phi(z)|^2\le M e^{r(z)^TCr(z)}.
\end{equation}
Then there exists a real symmetric block-diagonal matrix
\begin{equation}
 D=\bigoplus_{\alpha=1}^rD_\alpha
\end{equation}
such that
\begin{equation}
 DJ+JD=0,
 \qquad
 -J^TCJ\preceq D\preceq C.
 \label{eq:D-limit-constraints}
\end{equation}
In particular,
\begin{equation}
 -\id\prec D\prec\id.
\end{equation}
\end{lemma}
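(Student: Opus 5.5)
The plan is to reduce the lemma to a statement about a single pluriharmonic quadratic form. The first observation is that the lower bound in Eq.~\eqref{eq:D-limit-constraints} comes for free. If $DJ+JD=0$, then $J^TDJ=-D$, since $J^TJ=\id$. Evaluating $D\preceq C$ on the vector $Jr$ then gives $-D\preceq J^TCJ$. Also $-J^TCJ\succ-\id$ because $J$ is orthogonal. So the last claim $-\id\prec D\prec\id$ follows from $-\id\prec C\prec\id$.

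The anticommutation condition $DJ+JD=0$ says exactly that $r(z)^TDr(z)=\operatorname{Re}(z^TAz)$ for a complex symmetric matrix $A$. In other words, $D$ is a pluriharmonic quadratic form, i.e.\ the real part of a holomorphic quadratic. Block-diagonality of $D$ means $A=\bigoplus_\alpha A_\alpha$. It therefore suffices to produce complex symmetric matrices $A_\alpha$ on the blocks such that
\begin{equation}
 \sum_\alpha\operatorname{Re}(z_{B_\alpha}^TA_\alpha z_{B_\alpha})\le r(z)^TCr(z)\quad\text{for all }z.
\end{equation}

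Next I will pass to the asymptotic (tangent-cone) picture, which removes the constant $M$ and all lower-order behaviour. Set $u_\alpha=\log|f_\alpha|^2$; each $u_\alpha$ is plurisubharmonic and not identically $-\infty$. For $t\to\infty$ consider the rescalings $u_{\alpha,t}(z)=t^{-2}u_\alpha(tz)$. Each $f_\alpha$ may be multiplied by a constant without changing $\Phi$, and I will fix these constants so that the $u_{\alpha,t}$ are normalized at one common point where all factors are nonzero. The envelope then gives $\sum_\alpha u_{\alpha,t}(z)\le r(z)^TCr(z)+t^{-2}\log M$. Fixing all but one block at such a point shows that each $u_{\alpha,t}$ is locally uniformly bounded above. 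Hörmander's compactness theorem for plurisubharmonic families bounded above and not tending to $-\infty$ then yields an $L^1_{\rm loc}$ limit $U_\alpha$ along a subsequence. Each $U_\alpha$ is plurisubharmonic and $2$-homogeneous under real scaling, and the limits satisfy $\sum_\alpha U_\alpha\le r^TCr$.

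It then suffices to find, for each block separately, a holomorphic quadratic with $\operatorname{Re}(z^TA_\alpha z)\le U_\alpha(z)$. Summing these minorants gives the required $D\preceq C$.

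The main obstacle is this last minorant step: a $2$-homogeneous plurisubharmonic function need not be bounded below by any pluriharmonic quadratic pointwise. For example, $U_\alpha$ can equal $-\infty$ on a cone coming from zeros of $f_\alpha$. I would therefore not seek a pointwise minorant but argue by convex duality instead.

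The admissible set of block-diagonal pluriharmonic $D$ with $D\preceq C$ is convex. Suppose it were empty. Hahn--Banach in the finite-dimensional space of symmetric matrices would give a positive semidefinite $P\neq0$ with two properties: $\Tr(PC)<\Tr(PD)$ for every block-diagonal pluriharmonic $D$, and hence $P$ is annihilated (in trace pairing) by all such $D$, with $\Tr(PC)<0$.

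I would then test the envelope against the Gaussian-type probability measure associated with $P$, i.e.\ integrate $\sum_\alpha U_\alpha$ against it. The left-hand side can be lower-bounded block by block, using the sub-mean-value property of $U_\alpha$ over complex-linear orbits together with the vanishing of the pluriharmonic pairings, by a nonnegative quantity. Here the normalization at a common nonvanishing point is used. The right-hand side equals $\Tr(PC)<0$, which gives a contradiction.

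If the duality bookkeeping fails, the fallback is to restrict to generic complex lines in each block. There the indicator $g(\theta)$ of $U_\alpha$ is $2$-trigonometrically convex, so supporting sinusoids $\operatorname{Re}(a e^{2i\theta})$ exist. I would then glue these one-dimensional supports into a quadratic form by the same convexity argument.
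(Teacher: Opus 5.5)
Your proposal is correct in its essential mechanism, but it reaches the lemma by a genuinely different route from the paper.

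\textbf{The paper's route.} The paper argues locally and constructively. It maximizes the penalized function $|\Phi|^2e^{-r^TCr-\delta|r|^2}$ and uses $\Phi\neq0$ at the maximizer to write $\log|\Phi|=\sum_\alpha\operatorname{Re}h_\alpha$ locally. It then takes $D$ to be a limit, as $\delta\to0$, of the Hessians of $\log|\Phi|$ at the maximizers. Block diagonality and $DJ+JD=0$ come from Cauchy--Riemann, and $D\preceq C$ comes from the second-derivative test.

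\textbf{Your route.} You argue by convex duality, and your key step can be completed as follows. Orthogonality of $P$ to all block-diagonal $J$-anticommuting matrices means exactly that each diagonal block $P_{\alpha\alpha}$ commutes with $J$. Take the Gaussian measure $\mu_t$ with mean $r(z^0)$ and covariance $tP$. Each of its block marginals is then invariant under $z_{B_\alpha}-z^0_{B_\alpha}\mapsto e^{i\theta}(z_{B_\alpha}-z^0_{B_\alpha})$. The sub-mean-value property on complex lines therefore gives $\int\log|f_\alpha|^2\,d\mu_t\ge\log|f_\alpha(z^0_{B_\alpha})|^2$. Integrating the envelope against $\mu_t$ gives, for every $t>0$, $\sum_\alpha\log|f_\alpha(z^0_{B_\alpha})|^2\le\log M+r(z^0)^TCr(z^0)+t\Tr(PC)$. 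This is absurd once $\Tr(PC)<0$ and $z^0$ is a common nonvanishing point.

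\textbf{Repairs needed.}
\begin{itemize}
\item The tangent-cone and H\"ormander-compactness step is unnecessary: letting $t\to\infty$ already discards $M$ and all lower-order behaviour. It should also be dropped for a second reason: your claim that the subsequential limits $U_\alpha$ are $2$-homogeneous is false in general, since it requires completely regular growth.
\item Multiplying the $f_\alpha$ by constants becomes irrelevant after dividing by $t^2$. What the argument actually needs is a common nonvanishing point.
\item Strictness $\Tr(PC)<0$ does not follow from Hahn--Banach alone, because infeasible semidefinite systems can be only weakly separated. Here it does hold. Every $D$ with $J^TDJ=-D$ has spectrum symmetric about $0$, so the admissible set $\{D \text{ block diagonal},\ DJ+JD=0,\ D\preceq C+\epsilon\id\}$ is compact. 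By nestedness, infeasibility persists for some $\epsilon>0$, and the separating $P\neq0$ then gives $\Tr(PC)\le-\epsilon\Tr P<0$.
\end{itemize}

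\textbf{What each approach buys.} The paper's proof is elementary: a maximum point, Cauchy--Riemann, and finite-dimensional compactness. It also produces $D$ concretely as a limit of Hessians of $\log|\Phi|$, but it relies on smoothness of $\log|\Phi|$ near the contact point. Your dual argument is non-constructive. In exchange, it uses only that each $\log|f_\alpha|^2$ is plurisubharmonic and finite at one common point. It therefore applies verbatim to arbitrary, possibly nonsmooth, blockwise plurisubharmonic summands under a quadratic envelope. It also identifies the exact obstruction: the conclusion holds if and only if there is no $P\succeq0$ with $J$-commuting diagonal blocks and $\Tr(PC)<0$.
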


\begin{proof}
For $\delta>0$ define the penalized function
\begin{equation}
 G_\delta(r)
 :=|\Phi(z(r))|^2
 \exp\!\left[-r^TCr-\delta|r|^2\right].
\end{equation}
The assumed majorant gives
\begin{equation}
 0\le G_\delta(r)\le M e^{-\delta|r|^2}.
\end{equation}
Thus $G_\delta$ is continuous, tends to zero as $|r|\to\infty$, and is not identically zero because $\Phi\not\equiv0$. It therefore attains a strictly positive global maximum at some finite $r_\delta$. In particular,
\begin{equation}
 \Phi(z(r_\delta))\neq0.
\end{equation}
Because the product of the factors is nonzero at this point, every factor $f_\alpha(z_{B_\alpha}(r_\delta))$ is nonzero. Each factor therefore admits a local holomorphic logarithm near its block component of the contact point.

Define locally
\begin{equation}
 u(r)=\log|\Phi(z(r))|
\end{equation}
and let
\begin{equation}
 D_\delta:=H_u(r_\delta)
\end{equation}
be its real Hessian. Since locally
\begin{equation}
 \log\Phi(z)=\sum_{\alpha=1}^r h_\alpha(z_{B_\alpha})
\end{equation}
for holomorphic $h_\alpha$, one has $u=\sum_\alpha\operatorname{Re}h_\alpha$. Hence all mixed second derivatives between distinct blocks vanish and $D_\delta$ is block diagonal.

For completeness, the Cauchy--Riemann step can be read directly from the Hessian. Within one block write $h=u+iv$ and temporarily order its real coordinates as $(x_1,\ldots,x_\ell,y_1,\ldots,y_\ell)$. Differentiating $u_{x_j}=v_{y_j}$ and $u_{y_j}=-v_{x_j}$ gives
\begin{equation}
 H_u=\begin{pmatrix}A&B\\ B&-A\end{pmatrix},
 \qquad
 J=\begin{pmatrix}0&-\id\\ \id&0\end{pmatrix},
\end{equation}
with $A=A^T$ and $B=B^T$. Hence $H_uJ+JH_u=0$. Returning to the interleaved real-coordinate order merely conjugates both matrices by the same permutation. Therefore, blockwise and hence globally,
\begin{equation}
 D_\delta J+JD_\delta=0.
 \label{eq:Ddelta-CR}
\end{equation}
Equivalently, using $J^T=-J$ and $J^2=-\id$,
\begin{equation}
 J^TD_\delta J=-D_\delta.
 \label{eq:JDJ}
\end{equation}

At the maximum of $G_\delta$, the real Hessian of $\log G_\delta$ is negative semidefinite. Since
\begin{equation}
 \log G_\delta(r)=2u(r)-r^TCr-\delta|r|^2,
\end{equation}
its Hessian at $r_\delta$ is
\begin{equation}
 2D_\delta-2C-2\delta\id\preceq0.
\end{equation}
Thus
\begin{equation}
 D_\delta\preceq C+\delta\id.
 \label{eq:Ddelta-upper}
\end{equation}
Conjugating Eq.~\eqref{eq:Ddelta-upper} by the orthogonal matrix $J$ and using Eq.~\eqref{eq:JDJ} gives
\begin{equation}
 -D_\delta\preceq J^TCJ+\delta\id,
\end{equation}
or
\begin{equation}
 -J^TCJ-\delta\id
 \preceq D_\delta\preceq C+\delta\id.
 \label{eq:Ddelta-two-sided}
\end{equation}

For $0<\delta\le1$, Eq.~\eqref{eq:Ddelta-two-sided} bounds all eigenvalues of $D_\delta$ uniformly. The space of real symmetric $2m\times2m$ matrices is finite dimensional, so for any sequence $\delta_\ell\downarrow0$ there is a convergent subsequence. Relabel it and write
\begin{equation}
 D_{\delta_\ell}\longrightarrow D,
 \qquad D=\bigoplus_{\alpha=1}^rD_\alpha.
 \label{eq:D-limit}
\end{equation}
Passing to the limit in Eqs.~\eqref{eq:Ddelta-CR} and \eqref{eq:Ddelta-two-sided} gives Eq.~\eqref{eq:D-limit-constraints}. Finally, $D\preceq C\prec\id$ implies $D\prec\id$, while $C\prec\id$ implies $J^TCJ\prec\id$ and hence
\begin{equation}
 D\succeq-J^TCJ\succ-\id.
\end{equation}
Therefore $-\id\prec D\prec\id$.
\end{proof}

The three outputs of the lemma have distinct roles: block diagonality encodes the prescribed partition, $DJ+JD=0$ becomes purity after the matrix transform used below, and $D\preceq C$ becomes covariance domination below the target.

\subsection{Completion of Theorem~\ref{thm:product-range}}

\begin{proof}[Proof of Theorem~\ref{thm:product-range}]
After the partition-local centering reduction, the product-range vector yields the majorant \eqref{eq:Gaussian-envelope}. Lemma~\ref{lem:block-contact} therefore gives a real symmetric block-local $D$ with
\begin{equation}
 DJ+JD=0,
 \qquad -\id\prec D\prec\id,
 \qquad D\preceq C.
\end{equation}
Define the fractional-linear matrix transform
\begin{equation}
 W=(\id+D)(\id-D)^{-1}
   =2(\id-D)^{-1}-\id.
 \label{eq:W-Cayley}
\end{equation}
Because $-\id\prec D\prec\id$, $W$ is positive and symmetric. Since $D$ is block diagonal, so is $W$. The anticommutation $DJ=-JD$ gives
\begin{equation}
 (\id-D)^{-1}J=J(\id+D)^{-1},
\end{equation}
whence $WJ=JW^{-1}$ and therefore
\begin{equation}
 WJW=J,
 \qquad\text{equivalently}\qquad
 W\Omega W=\Omega.
 \label{eq:W-symplectic}
\end{equation}
Thus $W$ is a pure Gaussian covariance \cite{SimonMukundaDutta1994,Weedbrook2012,Serafini2017}. Indeed, $WJW=J$ is equivalent to $WJ=JW^{-1}$; functional calculus for the positive matrix $W$ gives $W^{1/2}J=JW^{-1/2}$, and hence $W^{1/2}JW^{1/2}=J$. Therefore $S=W^{1/2}$ is block-local and symplectic, and $W=SS^T$ is the covariance of a pure Gaussian state product across $\pi$.

It remains to compare $W$ with $V$. From $D\preceq C$,
\begin{equation}
 \id-D\succeq\id-C\succ0.
\end{equation}
Inversion reverses the positive-semidefinite order for positive-definite matrices, with no commutativity assumption between $C$ and $D$, so
\begin{equation}
 (\id-D)^{-1}\preceq(\id-C)^{-1}.
\end{equation}
Using $C=(V-\id)(V+\id)^{-1}$, equivalently
\begin{equation}
 V=2(\id-C)^{-1}-\id,
\end{equation}
we obtain
\begin{equation}
 W\preceq V.
 \label{eq:W-below-V}
\end{equation}
Set $\Delta=V-W\succeq0$. Let $\mu_\Delta$ be a centered classical Gaussian distribution of displacements $d$ with
\begin{equation}
 \mathbb E_{\mu_\Delta}[dd^T]=\Delta/2.
\end{equation}
In the covariance convention \eqref{eq:covariance-convention}, averaging the block-product pure Gaussian seed of covariance $W$ over $D(d)=\bigotimes_\alpha D_{B_\alpha}(d_\alpha)$ adds $2\mathbb E[dd^T]=\Delta$ to the covariance. This is the correlated-displacement construction used in the Gaussian separability criterion of Werner and Wolf \cite{WernerWolf2001}. The resulting state is therefore the centered Gaussian state of covariance $V$ and is $\pi$-separable. Restoring the original first moments by the inverse partition-local displacement preserves $\pi$-separability. Hence $\rho_G\in\Sep_\pi$. Thus the product vector in the range is not merely a witness: it forces the target covariance to contain a pure product Gaussian core, with the remaining covariance generated by classical correlated displacement noise.
\end{proof}

\section{Consequences beyond bipartitions}
\label{sec:hierarchy}

The bipartition consequence was stated immediately after Theorem~\ref{thm:main}. The same rigidity collapses partition mixing throughout the usual multipartite hierarchy.

Let
\begin{equation}
 \mathcal S_k
 :=\clconv\!\left(\bigcup_{\pi:\,|\pi|\ge k}\Sep_\pi\right)
 \label{eq:ksep-def}
\end{equation}
be the usual $k$-separable class \cite{Gabriel2010,GuhneToth2009} (equivalently, one may use exactly $k$ blocks because refinement only strengthens separability),
\begin{equation}
 \mathcal S_k
 =\clconv\!\left(\bigcup_{\pi:\,|\pi|=k}\Sep_\pi\right).
 \label{eq:ksep-convention-equivalence}
\end{equation}
Then Theorem~\ref{thm:main} gives
\begin{corollary}[Fixed-partition $k$-separability]
\label{cor:ksep}
For every finite-mode Gaussian state,
\begin{equation}
 \boxed{
 \rho_G\in\mathcal S_k
 \Longleftrightarrow
 \exists\,\pi:\ |\pi|\ge k,\ \rho_G\in\Sep_\pi.}
 \label{eq:ksep-rigidity}
\end{equation}
\end{corollary}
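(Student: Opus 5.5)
The corollary is Theorem~\ref{thm:main} applied to one particular finite family. We set
\begin{equation}
 \Ff_{\ge k}:=\{\pi\in\Part([n]):\ |\pi|\ge k\}.
\end{equation}
If $k>n$ this family is empty. Then $\mathcal S_k$ is the closed convex hull of the empty set, so both sides of Eq.~\eqref{eq:ksep-rigidity} are false and the equivalence holds trivially. We may therefore assume $1\le k\le n$, so that $\Ff_{\ge k}$ is nonempty and finite. By Eq.~\eqref{eq:ksep-def}, $\mathcal S_k=\Sep_{\Ff_{\ge k}}$ exactly in the sense of Eq.~\eqref{eq:family-sep}.

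For the direction ``$\Leftarrow$'', suppose $\rho_G\in\Sep_\pi$ for some $\pi$ with $|\pi|\ge k$. Then $\rho_G$ lies in the union inside the closed convex hull, and hence in $\mathcal S_k$. Gaussianity is not needed for this direction.

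For the direction ``$\Rightarrow$'', take $\rho_G\in\Gg\cap\mathcal S_k=\Gg\cap\Sep_{\Ff_{\ge k}}$. Theorem~\ref{thm:main} gives $\rho_G\in\bigcup_{\pi\in\Ff_{\ge k}}\Sep_\pi$. So there is a single $\pi$ with $|\pi|\ge k$ and $\rho_G\in\Sep_\pi$.

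If one uses the exactly-$k$ convention of Eq.~\eqref{eq:ksep-convention-equivalence}, the same argument applies to $\Ff_{=k}$. The two formulations agree by the refinement inclusion Eq.~\eqref{eq:refinement-inclusion}: every $\pi$ with $|\pi|\ge k$ refines some $\sigma$ with $|\sigma|=k$, obtained by merging blocks. Hence $\Sep_\pi\subseteq\Sep_\sigma$, and the two unions, and therefore their closed convex hulls, coincide.

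No step here is a genuine obstacle. All of the analytic work (the spectral selector, Gaussian powers, and product-range rigidity) is already contained in Theorem~\ref{thm:main}. The only points to check are that $\Ff_{\ge k}$ is finite, which is automatic since $\Part([n])$ is finite, and that the definition of $\mathcal S_k$ uses the same trace-norm closed convex hull as $\Sep_\Ff$, which it does by construction.
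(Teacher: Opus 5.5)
Your proof is correct and follows the same route as the paper, which obtains the corollary by applying Theorem~\ref{thm:main} directly to the finite family of partitions with at least $k$ blocks. Your treatment of the empty-family case $k>n$ and of the exactly-$k$ convention via Eq.~\eqref{eq:refinement-inclusion} is accurate; it makes explicit what the paper leaves implicit.
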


For $k$-producibility \cite{SorensenMolmer2001,GuhneTothBriegel2005,Lu2018}, define the allowed partition family
\begin{equation}
 \Ff_k^{\rm prod}
 :=\left\{\pi\in\Part([n]):\max_{B\in\pi}|B|\le k\right\}
 \label{eq:prod-family}
\end{equation}
and
\begin{equation}
 \mathcal P_k
 :=\clconv\!\left(\bigcup_{\pi\in\Ff_k^{\rm prod}}\Sep_\pi\right).
 \label{eq:kprod-def}
\end{equation}
Here $k$ counts physical parties in a block, not bosonic modes inside one party. Again by Theorem~\ref{thm:main},
\begin{corollary}[Fixed-partition $k$-producibility and exact party-based depth]
\label{cor:kprod}
\label{cor:depth}
For every finite-mode Gaussian state,
\begin{equation}
 \boxed{
 \rho_G\in\mathcal P_k
 \Longleftrightarrow
 \exists\,\pi:\ \rho_G\in\Sep_\pi,\ \max_{B\in\pi}|B|\le k.}
 \label{eq:kprod-rigidity}
\end{equation}
Consequently, for the party-based depth
\begin{equation}
 D_{\rm ent}(\rho):=\min\{k:\rho\in\mathcal P_k\},
 \label{eq:depth-def}
\end{equation}
one has
\begin{equation}
 \boxed{
 D_{\rm ent}(\rho_G)
 =\min_{\pi:\,\rho_G\in\Sep_\pi}\max_{B\in\pi}|B|.}
 \label{eq:depth-formula}
\end{equation}
\end{corollary}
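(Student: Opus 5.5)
The plan is to obtain Corollary~\ref{cor:kprod} as a direct specialization of Theorem~\ref{thm:main} to the family $\Ff=\Ff_k^{\rm prod}$, and then to read off the depth formula by unwinding the definition of $D_{\rm ent}$. First I will check that the family is admissible for Theorem~\ref{thm:main}: $\Ff_k^{\rm prod}$ is a subset of the finite set $\Part([n])$, and it is nonempty for every $k\ge1$ because the finest partition into singletons $\{\{1\},\ldots,\{n\}\}$ has all blocks of size $1\le k$. By Eq.~\eqref{eq:kprod-def}, $\mathcal P_k=\Sep_{\Ff_k^{\rm prod}}$ in the notation of Eq.~\eqref{eq:family-sep}. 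Theorem~\ref{thm:main} then gives $\Gg\cap\mathcal P_k=\Gg\cap\bigcup_{\pi\in\Ff_k^{\rm prod}}\Sep_\pi$, which is exactly Eq.~\eqref{eq:kprod-rigidity} once membership in $\Ff_k^{\rm prod}$ is spelled out as $\max_{B\in\pi}|B|\le k$. The reverse implication needs no Gaussianity and is just the inclusion $\Sep_\pi\subseteq\mathcal P_k$.

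For the depth formula, I will first note that $D_{\rm ent}(\rho_G)$ is well defined. Every state lies in $\Sep_{\{[n]\}}$, the closed convex hull of all pure states, and the one-block partition $\{[n]\}$ belongs to $\Ff_n^{\rm prod}$. Hence $\rho_G\in\mathcal P_n$, and the set over which the minimum is taken is a nonempty subset of $\{1,\ldots,n\}$. The family $\Ff_k^{\rm prod}$ increases with $k$, so $\mathcal P_k$ is monotone in $k$. This monotonicity is not strictly needed, but it clarifies the minimum.

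Let $k^\ast=\min_{\pi:\rho_G\in\Sep_\pi}\max_{B\in\pi}|B|$; this minimum is over a finite nonempty set, since it contains $\{[n]\}$. I will then show $D_{\rm ent}(\rho_G)=k^\ast$ by two inequalities. Choose $\pi^\ast$ attaining $k^\ast$. Then $\pi^\ast\in\Ff_{k^\ast}^{\rm prod}$ and $\rho_G\in\Sep_{\pi^\ast}\subseteq\mathcal P_{k^\ast}$, so $D_{\rm ent}\le k^\ast$. Conversely, $\rho_G\in\mathcal P_{D_{\rm ent}}$, and Eq.~\eqref{eq:kprod-rigidity} yields some $\pi$ with $\rho_G\in\Sep_\pi$ and $\max_B|B|\le D_{\rm ent}$, so $k^\ast\le D_{\rm ent}$.

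There is no real obstacle here: all the analytic content sits in Theorem~\ref{thm:main}. The only points to state carefully are that $\Ff_k^{\rm prod}$ is nonempty, so that the theorem applies, and that the minima on both sides are over nonempty finite sets. The second inequality for the depth is the one place where rigidity, rather than the trivial inclusion, is actually used.
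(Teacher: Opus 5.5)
Your proposal is correct and takes the same route as the paper, which obtains Eq.~\eqref{eq:kprod-rigidity} by applying Theorem~\ref{thm:main} to the family $\Ff_k^{\rm prod}$ and then reads off the depth formula from the definition of $D_{\rm ent}$. Your explicit checks that $\Ff_k^{\rm prod}$ is nonempty and that both minimizations range over nonempty finite sets (via $\{[n]\}$ and $\rho_G\in\mathcal P_n$) fill in details the paper leaves implicit.
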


\begin{corollary}[Persistence under party-wise tensor powers]
\label{cor:tensor-power}
If $\rho_G\in\Sep_\Ff$, then there exists a fixed $\pi\in\Ff$ such that
\begin{equation}
 \rho_G^{\otimes N}\in\Sep_\pi
 \qquad\text{for every }N\ge1.
 \label{eq:no-activation}
\end{equation}
\end{corollary}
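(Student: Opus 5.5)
The plan is to apply Theorem~\ref{thm:main} once to $\rho_G$ and then show that $\pi$-separability of a single copy survives party-wise tensor powers. Since $\rho_G\in\Gg\cap\Sep_\Ff$, Eq.~\eqref{eq:main-theorem} yields a fixed $\pi=\{B_1,\ldots,B_r\}\in\Ff$ with $\rho_G\in\Sep_\pi$. This $\pi$ is chosen independently of $N$, which is the whole content of ``fixed'' in Eq.~\eqref{eq:no-activation}.

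The tensor-power step should be read party-wise. The $N$-copy system has party $A_j$ holding $\Hh_{A_j}^{\otimes N}$, so each block of $\pi$ becomes $\Hh_{B_\alpha}^{\otimes N}$ after the obvious reordering unitary. That unitary is a permutation of tensor factors and maps product vectors to product vectors. For pure projectors $\Pi_1,\ldots,\Pi_N\in\Pp_\pi$, the reordered $\Pi_1\otimes\cdots\otimes\Pi_N$ is therefore again a pure $\pi$-product projector on the $N$-copy space.

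It then remains to pass through convex hulls and trace-norm closures. The map $(\sigma_1,\ldots,\sigma_N)\mapsto\sigma_1\otimes\cdots\otimes\sigma_N$ is multi-affine, so it sends $\operatorname{conv}(\Pp_\pi)^N$ into $\operatorname{conv}$ of $N$-copy $\pi$-product projectors. The map is also jointly continuous in trace norm, via the bound $\|\sigma\otimes\tau-\sigma'\otimes\tau'\|_1\le\|\sigma-\sigma'\|_1\|\tau\|_1+\|\sigma'\|_1\|\tau-\tau'\|_1$ together with $\|\cdot\|_1=1$ on states. Hence closures map into the closure, and $\rho_G^{\otimes N}\in\Sep_\pi$ for every $N$.

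A second, Gaussian-specific route gives the same conclusion and serves as a check. By Theorem~\ref{thm:product-range}'s construction, $\rho_G$ is a correlated partition-local displacement average of a block-product pure Gaussian state. Taking $N$ independent copies of the displacement noise and of the seed gives a representation of the same form for $\rho_G^{\otimes N}$ across the party-wise $\pi$. I expect no real obstacle here. The only point needing care is the closure/continuity step, which the trace-norm estimate above handles; the rest is bookkeeping of how party-wise grouping identifies the $N$-copy blocks.
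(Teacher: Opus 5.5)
Your proposal is correct and follows the paper's argument. Theorem~\ref{thm:main} fixes a single $\pi\in\Ff$ independently of $N$, and $\pi$-separability then survives party-wise tensor powers; the paper leaves that last step implicit, and you verify it properly via the factor-reordering unitary, multi-affinity on $\operatorname{conv}(\Pp_\pi)$, and trace-norm continuity of $\sigma\mapsto\sigma^{\otimes N}$. One small caveat on your optional Gaussian cross-check: the displacement-average form of $\rho_G$ itself should be cited from Proposition~\ref{thm:cov-criterion}, or obtained as a limit $s\uparrow1$ of the pure block covariances lying below the covariance of $\rho_{G,s}$. It does not follow directly from Theorem~\ref{thm:product-range}, whose range hypothesis was only established for the powers $\rho_{G,s}$.
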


In particular, party-wise tensor powers cannot activate GME from a biseparable finite-mode Gaussian state. Here each physical party $A_j$ is understood to hold all copies of its local subsystem. This identifies a precise boundary within the broader multicopy resource theory. Activation is possible in finite-dimensional settings \cite{PalazuelosDeVicente2022,Yamasaki2022}, has been demonstrated experimentally using two copies of a biseparable three-qubit state \cite{Starek2026}, and extends to non-Gaussian continuous-variable states \cite{Baksova2025}. By contrast, for an exact biseparable finite-mode Gaussian state, Theorem~\ref{thm:main} supplies a fixed separable cut and Eq.~\eqref{eq:no-activation} shows that the same cut survives every party-wise tensor power. The conclusion does not cover arbitrary regrouping across copies.

\section{Fixed-partition covariance characterization and computation}
\label{sec:covariance}

The structural theorem also yields a direct computational consequence. Once partition mixing has been removed, deciding whether a Gaussian state belongs to $\Sep_\Ff$ reduces to checking finitely many fixed-partition covariance feasibility problems. For one prescribed partition the required criterion is the following, extending the bipartite and three-mode criteria of Refs.~\cite{WernerWolf2001,Giedke2001,GiedkeTripartite2001}.

\begin{proposition}[Fixed-partition Gaussian covariance criterion]
\label{thm:cov-criterion}
Let $\pi=\{B_1,\ldots,B_r\}$ and let $\rho_G(V)$ be a finite-mode Gaussian state. Then
\begin{equation}
 \rho_G(V)\in\Sep_\pi
 \label{eq:cov-sep-left}
\end{equation}
if and only if there are real symmetric block covariances $W_{B_\alpha}$ satisfying
\begin{equation}
 W_{B_\alpha}+i\Omega_{B_\alpha}\succeq0
 \qquad(\alpha=1,\ldots,r)
 \label{eq:local-physical-W}
\end{equation}
and
\begin{equation}
 V-\bigoplus_{\alpha=1}^rW_{B_\alpha}\succeq0.
 \label{eq:cov-domination}
\end{equation}
\end{proposition}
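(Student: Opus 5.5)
We prove the two directions separately, reusing the machinery already built for Theorem~\ref{thm:product-range}. Throughout we use the partition-local centering reduction: first moments are removed by $D(d)=\bigotimes_\alpha D_{B_\alpha}(d_\alpha)$. Neither $\pi$-separability nor the covariance conditions \eqref{eq:local-physical-W}--\eqref{eq:cov-domination} depend on first moments, so we may assume $\rho_G(V)$ is centered.

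\emph{Sufficiency.} Suppose block covariances $W_{B_\alpha}$ satisfy \eqref{eq:local-physical-W} and \eqref{eq:cov-domination}.

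1. Each physical $W_{B_\alpha}$ dominates a pure covariance $W'_{B_\alpha}\preceq W_{B_\alpha}$. To see this, take the Williamson form $W_{B_\alpha}=S_\alpha\,\mathrm{diag}(\nu_j,\nu_j)\,S_\alpha^T$ with $\nu_j\ge1$, and set $W'_{B_\alpha}=S_\alpha S_\alpha^T$.

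2. The product of the block-pure Gaussian states with covariances $W'_{B_\alpha}$ is a pure $\pi$-product vector, so its projector lies in $\Pp_\pi$.

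3. Let $W'=\bigoplus_\alpha W'_{B_\alpha}$ and $\Delta=V-W'\succeq0$. Average this seed over centered Gaussian displacements $d$ with $\mathbb E[dd^T]=\Delta/2$, exactly as in the completion of Theorem~\ref{thm:product-range}. The result is the centered Gaussian state of covariance $V$.

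4. Each displaced seed remains in $\Pp_\pi$ by Eq.~\eqref{eq:partition-local-displacement}. The average is a Bochner integral of $\Pp_\pi$-valued elements with respect to a probability measure, hence lies in $\clconv(\Pp_\pi)=\Sep_\pi$. If this membership needs a citation, approximate the integral in trace norm by finite convex combinations, using continuity of $d\mapsto D(d)\Pi D(d)^\dagger$ in trace norm.

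\emph{Necessity.} Suppose $\rho_G\in\Sep_\pi$.

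1. Apply the one-power spectral selector (Lemma~\ref{lem:one-power-selector}) with $\Ff=\{\pi\}$ and $s\in(0,1)$. Lemma~\ref{lem:Gaussian-powers} guarantees $\Tr\rho_G^{1-s}<\infty$. This yields a $\pi$-product vector in $\Ran\rho_G^{s/2}=\Ran\rho_{G,s}^{1/2}$.

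2. The proof of Theorem~\ref{thm:product-range} applied to the Gaussian state $\rho_{G,s}$, with covariance $V_s$, produces a block-diagonal pure covariance $W^{(s)}=\bigoplus_\alpha W^{(s)}_{B_\alpha}$ with $W^{(s)}\preceq V_s$.

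3. Pure block covariances automatically satisfy $W^{(s)}_{B_\alpha}+i\Omega_{B_\alpha}\succeq0$.

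4. Let $s\uparrow1$. From Eq.~\eqref{eq:Gaussian-power-form}, $V_s=S_G\bigoplus_j\frac{1+q_j^s}{1-q_j^s}\id_2\,S_G^T\to V$. The family $W^{(s)}$ is bounded, since $0\prec W^{(s)}\preceq V_s$ and $V_s$ is bounded for $s$ near $1$. Extract a convergent subsequence $W^{(s)}\to W$. The closed conditions then pass to the limit: block-diagonality, $W_{B_\alpha}+i\Omega_{B_\alpha}\succeq0$, and $V-W\succeq0$.

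5. Alternatively, since the main-theorem argument already shows $\rho_{G,s}\in\Sep_\pi$, one may take $s=1$ directly whenever the selector's range hypothesis holds at $s=1$. The limiting route avoids needing this.

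\emph{Main obstacle.} The delicate point is the limit step in necessity, specifically ensuring the covariance of $\rho_{G,s}$ converges to $V$ in the nonfaithful case. Here $q_j=0$ gives $V_s$ equal to $1$ on those modes for every $s$. The explicit formula above handles this, so compactness suffices.

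A more classical alternative for necessity is to recompute second moments of a $\pi$-separable ensemble, Werner--Wolf style. That route fails in infinite dimensions because components may lack finite second moments, which is why we route through Theorem~\ref{thm:product-range} instead.
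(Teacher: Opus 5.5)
Your proof is correct. The sufficiency half is essentially the paper's correlated-displacement construction. The only difference is that you first shrink each $W_{B_\alpha}$ to a pure covariance via Williamson, which puts every displaced seed directly in $\Pp_\pi$; the paper instead mixes displaced copies of the possibly mixed Gaussian states $\sigma_\alpha$.

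Your necessity argument takes a genuinely different route. The paper (Appendix~\ref{app:covnecessity}) never uses Gaussianity of the target:
\begin{itemize}
\item From the barycentric ensemble \eqref{eq:single-pi-ensemble}, bounded truncations and monotone convergence give $\Tr(\rho X_x^2)=\int\langle X_x^2\rangle_\Pi\,\nu(d\Pi)$. Hence $\nu$-almost every component has finite second moments.
\item Lemma~\ref{lem:weak-CCR} gives componentwise uncertainty without quadratic operator domains.
\item Product structure makes each $V(\Pi)$ block diagonal.
\item The law of total covariance $V=\int V(\Pi)\,\nu(d\Pi)+2\operatorname{Cov}_\nu[m(\Pi)]$ then yields $W_{B_\alpha}=\int V_{B_\alpha}(\Pi)\,\nu(d\Pi)$.
\end{itemize}
You instead run the selector with $\Ff=\{\pi\}$ and apply the contact-lemma construction of Theorem~\ref{thm:product-range} to $\rho_{G,s}$. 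This gives a pure block-local $W^{(s)}\preceq V_s$, and you pass to $s\uparrow1$ by compactness. Your explicit $V_s\to V$ is right, including $q_j=0$. The limit is genuinely needed, since $q_j^s>q_j$ gives $V_s\succeq V$, so $W^{(s)}\preceq V_s$ alone does not give $W^{(s)}\preceq V$. The reuse is non-circular, because neither the selector nor Theorem~\ref{thm:product-range} depends on this necessity direction.

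The trade-off is as follows. Your route delivers a pure block-product core directly and avoids measurability and unbounded-domain bookkeeping. However, it is confined to Gaussian targets and rests on the full holomorphic machinery. The paper's argument is elementary and proves necessity for every $\pi$-separable state with finite second moments.

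One correction to your closing remark: the Werner--Wolf-style moment route does not fail in infinite dimensions. The identity above, together with $\Tr(\rho X_x^2)<\infty$, forces components with infinite second moments to form a $\nu$-null set. Infinite dimension only demands care with operator domains and measurability, which is exactly what the appendix supplies. Your detour is therefore valid but not forced.
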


For context, Werner and Wolf proved this covariance-domination criterion as Proposition~1 for arbitrary-mode \emph{bipartite} Gaussian systems \cite{WernerWolf2001}, and Giedke \emph{et al.} obtained an independent arbitrary-mode bipartite criterion \cite{Giedke2001}. In the special $1\times1\times1$ three-mode setting, Theorem~$2'$ of Ref.~\cite{GiedkeTripartite2001} gives the three-party version. We are not aware of a primary source that states the same state-space covariance-domination theorem in this generality, for an arbitrary finite multipartition with arbitrary mode numbers in its blocks. The sufficiency direction extends directly, while Appendix~\ref{app:covnecessity} gives a self-contained necessity proof that accommodates continuous decompositions.

For sufficiency, put $W=\bigoplus_\alpha W_{B_\alpha}$ and $\Delta=V-W\succeq0$, choose centered Gaussian states $\sigma_\alpha$ with covariances $W_{B_\alpha}$, and choose a centered classical Gaussian displacement with covariance $\Delta/2$. Then
\begin{equation}
 \int\bigotimes_{\alpha=1}^r
 D_{B_\alpha}(d_\alpha)\sigma_\alpha
 D_{B_\alpha}(d_\alpha)^\dagger\,\mu_\Delta(dd)
 \label{eq:cov-suff-mixture}
\end{equation}
is $\pi$-separable and Gaussian with covariance $V$. This is the standard correlated-displacement reconstruction, with the factor $1/2$ fixed by the convention \eqref{eq:covariance-convention} \cite{Weedbrook2012,Serafini2017}. Appendix~\ref{app:covnecessity} proves necessity for an arbitrary finite multipartition directly from a continuous product ensemble and finite second moments.

For fixed $\pi$, Eqs.~\eqref{eq:local-physical-W}--\eqref{eq:cov-domination} are linear matrix inequalities. The Hermitian uncertainty condition has the real form
\begin{equation}
 W_B+i\Omega_B\succeq0
 \Longleftrightarrow
 \begin{pmatrix}W_B&-\Omega_B\\ \Omega_B&W_B\end{pmatrix}\succeq0.
 \label{eq:realified-uncertainty}
\end{equation}
Hence fixed-partition separability is a real semidefinite feasibility problem, and Theorem~\ref{thm:main} gives the finite disjunction
\begin{equation}
 \begin{aligned}
 \rho_G(V)\in\Sep_\Ff
 \quad\Longleftrightarrow\quad
 &\text{Eqs.~\eqref{eq:local-physical-W}--\eqref{eq:cov-domination}}\\
 &\text{are feasible for some }\pi\in\Ff.
 \end{aligned}
 \label{eq:family-SDP-disjunction}
\end{equation}
This exact state-space disjunction is not the same object as a single mixed-partition covariance relaxation \cite{HyllusEisert2006,Gerke2016,Baksova2025}.

\section{A mixed Gaussian example}
\label{sec:example}

To show that the hierarchy consequences are not restricted to pure or block-uncorrelated states, we give a strictly mixed four-party Gaussian state with cross-block classical correlations whose exact party-based entanglement depth is nevertheless fixed by one partition. Consider four one-mode parties in the interleaved ordering
\begin{equation}
 R=(q_1,p_1,q_2,p_2,q_3,p_3,q_4,p_4)^T.
\end{equation}
Let
\begin{equation}
 c=\cosh(2r),\qquad s=\sinh(2r),\qquad r>0,
\end{equation}
and let
\begin{equation}
 T_r=
 \begin{pmatrix}
 c&0&s&0\\
 0&c&0&-s\\
 s&0&c&0\\
 0&-s&0&c
 \end{pmatrix}
 \label{eq:TMSV-covariance}
\end{equation}
be a two-mode squeezed-vacuum covariance. The pure seed
\begin{equation}
 W_r=T_r\oplus T_r
 \label{eq:two-TMSV-seed}
\end{equation}
is product across $\pi_\star=\{\{1,2\},\{3,4\}\}$. Set
\begin{equation}
 \begin{aligned}
 u&=(1,0,0,0,1,0,0,0)^T,\\
 V_{r,t}&=W_r+tuu^T,\qquad t>0.
 \end{aligned}
 \label{eq:mixed-example-covariance}
\end{equation}
The added term is correlated classical displacement noise. If $\xi$ is centered Gaussian with variance $t/2$, displacing modes $1$ and $3$ by the same $q$ shift gives
\begin{equation}
 \rho_{r,t}=\int
 \left[D_1(\xi)\rho_{12}^{(r)}D_1(\xi)^\dagger\right]
 \otimes
 \left[D_3(\xi)\rho_{34}^{(r)}D_3(\xi)^\dagger\right]
 \,\mu_t(d\xi),
 \label{eq:mixed-example-decomposition}
\end{equation}
so
\begin{equation}
 \rho_{r,t}\in\Sep_{12|34}\subseteq\mathcal P_2.
 \label{eq:mixed-example-2prod}
\end{equation}
The target nevertheless has a cross-block classical correlation,
\begin{equation}
 (V_{r,t})_{q_1q_3}=t.
 \label{eq:mixed-cross-correlation}
\end{equation}

The state is strictly mixed (nonpure). Since $W_r$ is pure, the matrix determinant lemma gives
\begin{equation}
 \det V_{r,t}=1+t\,u^TW_r^{-1}u=1+2ct>1,
 \label{eq:mixed-example-determinant}
\end{equation}
so the Gaussian state is not pure \cite{Weedbrook2012,Serafini2017}.

To show that its exact party-based depth is two, reduce to modes $1$ and $2$:
\begin{equation}
 T_{r,t}^{(12)}=T_r+t e_{q_1}e_{q_1}^T.
 \label{eq:noisy-TMSV-reduction}
\end{equation}
In $2\times2$ block form,
\begin{equation}
 A=\operatorname{diag}(c+t,c),\qquad
 B=c\id_2,\qquad
 C_{12}=\operatorname{diag}(s,-s).
\end{equation}
For the partially transposed two-mode covariance, the squared symplectic eigenvalues are the roots of \cite{Simon2000,Giedke2001,Serafini2017}
\begin{equation}
 x^2-\widetilde\Delta x+\det T_{r,t}^{(12)}=0,
 \label{eq:PT-polynomial}
\end{equation}
with
\begin{equation}
 \widetilde\Delta=2(c^2+s^2)+ct,
 \qquad
 \det T_{r,t}^{(12)}=1+ct.
\end{equation}
At $x=1$ the polynomial equals
\begin{equation}
 1-\widetilde\Delta+\det T_{r,t}^{(12)}=-4s^2<0.
 \label{eq:PT-polynomial-at-one}
\end{equation}
Since the roots are positive, one satisfies $\widetilde\nu_-^2<1$; by the positive-partial-transpose criterion \cite{Peres1996,Simon2000}, the $12$ reduction is entangled for every $r,t>0$. The same holds for the $34$ reduction. Hence any four-party bipartition that separates either pair is impossible; every bipartition except $12|34$ separates at least one of the two pairs, while $12|34$ is explicitly feasible because $W_r\preceq V_{r,t}$. In particular the state is not fully separable, and together with Eq.~\eqref{eq:mixed-example-2prod} this gives
\begin{equation}
 \boxed{D_{\rm ent}(\rho_{r,t})=2.}
 \label{eq:mixed-example-depth}
\end{equation}

\section{Discussion}
\label{sec:discussion}

The central message is that the finite-mode Gaussian state set is rigid under convexification over separability partitions. For generic mixed states, the trace-norm-closed convex hull of several fixed-partition separable sets can contain states that belong to none of the sets individually. Theorem~\ref{thm:main} shows that this added convex region disappears when one restricts the target to be Gaussian. The bipartition instance closes the FIS--GME gap; the general statement shows that the same phenomenon is not specific to bipartitions but is a structural feature of the full partition hierarchy.

The mechanism also goes beyond a Gaussian-decomposition argument. The decomposition of the target may be continuous and arbitrarily non-Gaussian. What matters is instead the analytic structure of the target: operator domination becomes an exact quadratic growth envelope for a Bargmann entire function, while product structure becomes block factorization. Their contact forces a block-local pure Gaussian covariance below the target covariance. In this sense, Gaussianity removes convexification not by constraining the decomposition, but by making the target too rigid to occupy the region created solely by mixing partitions.

This rigidity has direct certification and computational consequences. For an exact Gaussian target, establishing inseparability across every bipartition certifies GME without an additional witness designed to exclude convex mixtures over different cuts. More generally, the mixed-partition membership problem in the full infinite-dimensional state space reduces to a finite disjunction of fixed-partition covariance semidefinite feasibility problems. The persistence of a fixed cut under party-wise tensor powers also separates the exact Gaussian sector from settings in which multiple copies of a biseparable state reveal GME \cite{PalazuelosDeVicente2022,Yamasaki2022,Baksova2025,Starek2026}.

The scope remains deliberately finite and exact: we assume finitely many parties, finitely many modes per party, an exact Gaussian target, and hence a finite partition family. Infinite-mode or thermodynamic limits may require new compactness and trace-class arguments, and approximately Gaussian targets would require a quantitative rigidity modulus. It is also essential to keep state-space biseparability distinct from covariance-only mixed-partition compatibility. Whether a comparable disappearance of convexification occurs on other structured analytic families of quantum states is an open question.

\begin{acknowledgments}
OpenAI ChatGPT (GPT-5.6 Sol) was used as an assistive tool for literature synthesis, mathematical proof auditing, and manuscript drafting and revision. The authors formulated the research questions and arguments, directed the AI-assisted analyses, independently checked the resulting mathematical and bibliographic content, and take full responsibility for all results and statements in the manuscript.
\end{acknowledgments}

\section*{Data Availability}
No data were created or analyzed in this study.

\appendix

\section{Necessity of the fixed-partition covariance criterion}
\label{app:covnecessity}

We give a self-contained proof of covariance necessity for an arbitrary finite partition under the trace-norm-closed state-space definition of separability. This extension is useful because the original Werner--Wolf formulation is stated for two blocks \cite{WernerWolf2001}, whereas here the fixed partition may contain several blocks and the separable decomposition may be continuous. We use throughout the phase-space and covariance convention of Eq.~\eqref{eq:covariance-convention}.

Let $\pi=\{B_1,\ldots,B_r\}$ be any partition into mode blocks. The fixed-partition separable class is the trace-norm-closed convex hull of pure $\pi$-product projectors. Since that set of projectors is closed, the barycentric representation theorem of Holevo, Shirokov, and Werner \cite{HolevoShirokovWerner2005} yields, for every $\rho\in\Sep_\pi$, a Borel probability measure $\nu$ supported on $\Pp_\pi$ such that
\begin{equation}
 \rho=\int_{\Pp_\pi}\Pi\,\nu(d\Pi)
 \label{eq:single-pi-ensemble}
\end{equation}
in the trace-class (Bochner) sense of Ref.~\cite{HolevoShirokovWerner2005}. Thus Eq.~\eqref{eq:single-pi-ensemble} reproduces the expectation of every bounded observable.

\subsection{First-order uncertainty without quadratic operator domains}

The covariance reconstruction below only needs vectors in the domains of the individual quadratures. We record the corresponding weak canonical-commutator fact so that no expectation value of an unbounded product $R_jR_k$ is silently assumed.

\begin{lemma}[Weak CCR and covariance uncertainty]
\label{lem:weak-CCR}
Let $|\psi\rangle$ be normalized and belong to $\bigcap_{a=1}^{2m}\Dom R_a$. Then
\begin{equation}
 \langle R_a\psi,R_b\psi\rangle-
 \langle R_b\psi,R_a\psi\rangle=i\Omega_{ab}.
 \label{eq:weak-CCR}
\end{equation}
If $m_j=\langle\psi|R_j|\psi\rangle$ and
\begin{equation}
 Q_\psi(x)=2\left\|\sum_jx_j(R_j-m_j)|\psi\rangle\right\|^2,
 \qquad x\in\mathbb R^{2m},
 \label{eq:weak-quadratic-form}
\end{equation}
has real symmetric representing matrix $V_\psi$, then
\begin{equation}
 V_\psi+i\Omega\succeq0.
 \label{eq:weak-form-uncertainty}
\end{equation}
\end{lemma}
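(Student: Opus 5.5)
The plan is to derive the weak CCR~\eqref{eq:weak-CCR} from the Weyl relations, avoiding any products $R_aR_b$ acting on $\psi$, and then obtain~\eqref{eq:weak-form-uncertainty} from a Cauchy--Schwarz type positivity argument with complex test vectors.

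\emph{Step 1: the weak CCR.} For $x\in\mathbb R^{2m}$ write $R(x)=\sum_a x_aR_a$. This operator is self-adjoint (essentially self-adjoint on the finite-particle vectors), and the Weyl operators $W(x)=e^{iR(x)}$ satisfy
\begin{equation}
 W(x)W(y)=e^{-\frac i2 x^T\Omega y}\,W(x+y).
\end{equation}
Consequently
\begin{equation}
 W(y)^\dagger W(x)^\dagger W(y)=e^{-ix^T\Omega y}\,W(x)^\dagger .
\end{equation}
Fix $\psi\in\bigcap_a\Dom R_a$. On this intersection $R(x)$ acts linearly, so $\psi\in\Dom R(x)$ and $R(x)\psi=\sum_a x_aR_a\psi$. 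Set
\begin{equation}
 F(t,u)=\langle W(tx)\psi,\,W(uy)\psi\rangle .
\end{equation}
By Stone's theorem, $t\mapsto W(tx)\psi$ is differentiable with derivative $iR(x)\psi$ at $t=0$, and likewise for $y$. I then compute the mixed derivative $\partial_t\partial_u$ of $\langle \psi, W(tx)^\dagger W(uy)\psi\rangle$ at $0$ in two ways. Differentiating directly gives $\langle R(x)\psi,R(y)\psi\rangle$. Alternatively, the Weyl relation rewrites the same quantity as $e^{-itu\,x^T\Omega y}\langle W(uy)^\dagger\psi,\,W(tx)^\dagger\psi\rangle$, which after relabelling contributes $\langle R(y)\psi,R(x)\psi\rangle$ plus the derivative of the phase, namely $-i\,x^T\Omega y$ up to sign conventions.

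Only first derivatives of vector-valued maps appear here, because the phase factor is smooth and the remaining factor is a product of two $C^1$ vector curves. The genuine difficulty is justifying the mixed derivative. I would avoid it by working with difference quotients: write
\begin{equation}
 \langle (W(tx)-1)\psi/t,\ (W(uy)-1)\psi/u\rangle
\end{equation}
and use the Weyl relation to express the commuted version as the same quantity times the phase, plus explicit terms that converge. Taking $t,u\to0$ and matching signs with $[R_j,R_k]=i\Omega_{jk}$ yields~\eqref{eq:weak-CCR} by bilinearity.

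\emph{Step 2: the uncertainty relation.} Replacing $R_a$ by $R_a-m_a$ changes nothing in~\eqref{eq:weak-CCR}. For complex $w\in\mathbb C^{2m}$, consider the vector
\begin{equation}
 \chi=\sum_a w_a(R_a-m_a)\psi .
\end{equation}
Expanding $\|\chi\|^2\ge0$ gives
\begin{equation}
 \sum_{a,b}\bar w_a w_b\,G_{ab}\ge0,\qquad G_{ab}=\langle (R_a-m_a)\psi,(R_b-m_b)\psi\rangle .
\end{equation}
By polarization of $Q_\psi$, $2\operatorname{Re}G=V_\psi$, while~\eqref{eq:weak-CCR} gives $2i\operatorname{Im}G=i\Omega$ in matrix form. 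Hence $2G=V_\psi+i\Omega$, and positivity of $G$ is exactly~\eqref{eq:weak-form-uncertainty}.

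I expect the main obstacle to be the limit interchange in Step 1, including fixing the sign so it is consistent with $[R_j,R_k]=i\Omega_{jk}$. Step 2 is routine algebra.
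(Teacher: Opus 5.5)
Your proof is correct. Step~2 is essentially identical to the paper's Gram-matrix argument ($2\Gamma=V_\psi+i\Omega\succeq0$). Step~1, however, takes a different route.

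The paper invokes Stone--von Neumann to pass to the Schr\"odinger realization and treats the pairs separately. There $Q$--$Q$ is immediate and $P$--$P$ follows by Fourier transform. The $Q_j$--$P_k$ case comes from integrating $x_j\partial_k|\psi|^2$ against a cutoff $\chi_R$ and then removing the cutoff by dominated convergence. You instead work directly with the Weyl relations, which are the structural input the Stone--von Neumann reduction already rests on. This handles all pairs uniformly, and the only analysis it needs is strong differentiability of $t\mapsto e^{itR_a}\psi$ at $t=0$ for $\psi\in\Dom R_a$. Your route is representation-free and needs no case split or cutoff estimate. The paper's route is elementary calculus in $L^2(\mathbb R^m)$, paid for with the case split and the cutoff limit.

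Three small points need fixing.

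First, your conjugation formula has the wrong sign. Your Weyl relation $W(x)W(y)=e^{-\frac i2x^T\Omega y}W(x+y)$ is correct, and it gives $W(y)^\dagger W(x)^\dagger W(y)=e^{+ix^T\Omega y}W(x)^\dagger$. Equivalently, $e^{-itR_a}e^{iuR_b}=e^{itu\Omega_{ab}}e^{iuR_b}e^{-itR_a}$. With your sign the conclusion would come out as $-i\Omega_{ab}$.

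Second, the mixed-derivative obstacle you anticipate is not real. Write $F(t,u)=\langle e^{itR_a}\psi,e^{iuR_b}\psi\rangle=e^{itu\Omega_{ab}}\langle e^{-iuR_b}\psi,e^{-itR_a}\psi\rangle$ and compute the iterated derivative $\partial_u[\partial_tF(0,u)]_{u=0}$ from both expressions. Each step differentiates only one vector curve at $0$ with the other variable fixed. The two expressions give $\langle R_a\psi,R_b\psi\rangle$ and $i\Omega_{ab}+\langle R_b\psi,R_a\psi\rangle$ respectively, which is the weak CCR. Your difference-quotient version also works.

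Third, start with $x=e_a$, $y=e_b$. That removes the need for the claim $R(x)\psi=\sum_ax_aR_a\psi$ and for the bilinearity step. The claim is true, since $\sum_ax_aR_a$ on $\bigcap_a\Dom R_a$ is a symmetric extension of an essentially self-adjoint operator and so lies inside its closure. But ``acts linearly'' is not by itself a justification.
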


\begin{proof}
By the finite-mode Stone--von Neumann theorem it is enough to use the Schr\"odinger realization \cite{vonNeumann1931,Folland1989}. The $Q$--$Q$ identity is immediate and the $P$--$P$ identity follows after Fourier transform. For $Q_j=x_j$ and $P_k=-i\partial_k$, take a smooth cutoff $\chi_R$ equal to one on $|x|\le R$ and zero on $|x|\ge2R$. The assumptions $x_j\psi,\partial_k\psi\in L^2$ imply $x_j\bar\psi\,\partial_k\psi\in L^1$. Distributional integration by parts gives
\begin{equation}
 \int \chi_R x_j\partial_k|\psi|^2
 =-\delta_{jk}\int\chi_R|\psi|^2
  -\int x_j(\partial_k\chi_R)|\psi|^2.
 \label{eq:cutoff-ibp}
\end{equation}
The last term tends to zero because $|x_j\partial_k\chi_R|$ is uniformly bounded on the annulus $R\lesssim |x|\lesssim2R$, whereas the $L^2$ mass of $\psi$ on that annulus tends to zero. Dominated convergence on the left then yields $\int x_j\partial_k|\psi|^2=-\delta_{jk}$, which is exactly Eq.~\eqref{eq:weak-CCR} for $Q_j,P_k$.

Now set $a_j=(R_j-m_j)|\psi\rangle$ and $\Gamma_{jk}=\langle a_j|a_k\rangle$. The Gram matrix $\Gamma$ is positive semidefinite, Eq.~\eqref{eq:weak-CCR} gives $\Gamma-\Gamma^T=i\Omega$, and Eq.~\eqref{eq:weak-quadratic-form} gives $V_\psi=2\operatorname{Re}\Gamma$. Hence
\begin{equation}
 V_\psi+i\Omega=2\Gamma\succeq0.
\end{equation}
\end{proof}

\subsection{Continuous-ensemble law of total covariance}

\begin{proposition}[Multipartition covariance necessity]
\label{prop:multipartition-cov-necessity}
Let $\rho$ be any state with finite quadrature second moments that is separable across the finite partition $\pi=\{B_1,\ldots,B_r\}$. If $V$ is its covariance in the convention \eqref{eq:covariance-convention}, then there exist real symmetric block covariances $W_{B_\alpha}$ such that
\begin{equation}
 W_{B_\alpha}+i\Omega_{B_\alpha}\succeq0
 \quad(\alpha=1,\ldots,r),
 \label{eq:app-local-physical-W}
\end{equation}
and
\begin{equation}
 V-\bigoplus_{\alpha=1}^rW_{B_\alpha}\succeq0.
 \label{eq:app-cov-domination}
\end{equation}
\end{proposition}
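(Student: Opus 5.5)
We prove Proposition~\ref{prop:multipartition-cov-necessity} with the law of total covariance applied to the continuous product ensemble of Eq.~\eqref{eq:single-pi-ensemble}. Write $\rho=\int\Pi\,\nu(d\Pi)$ with $\Pi=|\phi\rangle\langle\phi|$ and $\phi=\bigotimes_\alpha\phi_\alpha$. The candidate block covariances are the $\nu$-averages of the component covariances:
\begin{equation}
 W_{B_\alpha}:=\int V_{\phi_\alpha}\,\nu(d\Pi).
\end{equation}
The covariance $V$ should then split as this average within-component term plus a positive-semidefinite between-component term, namely the covariance of the first-moment vector $m(\Pi)$.

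\textbf{Step 1: components have finite first moments.} For each quadrature $R_a$, truncate the second moment with bounded spectral cutoffs $f_K(R_a)=\min(R_a^2,K)$. Since the bounded expectation is reproduced by the ensemble, monotone convergence gives
\begin{equation}
 \int\|R_a\phi\|^2\,\nu(d\Pi)=\Tr(\rho R_a^2)<\infty.
\end{equation}
Here $\|R_a\phi\|^2$ is interpreted as $+\infty$ off $\Dom R_a$. Consequently, $\nu$-almost every $\phi$ lies in $\bigcap_a\Dom R_a$, and $m(\Pi)$ is square integrable.

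For a product vector, $\|R_a\phi\|=\|R_a\phi_\alpha\|$ whenever $a\in B_\alpha$, with the other factors normalized. The same holds for cross terms with $a\in B_\alpha$, $b\in B_\beta$, $\alpha\neq\beta$:
\begin{equation}
 \langle (R_a-m_a)\phi,(R_b-m_b)\phi\rangle=0.
\end{equation}
This follows by factorization, since each centered factor has zero mean. Hence the weak quadratic form $V_\phi$ of Lemma~\ref{lem:weak-CCR} is block diagonal, $V_\phi=\bigoplus_\alpha V_{\phi_\alpha}$. Lemma~\ref{lem:weak-CCR} applied to each block gives $V_{\phi_\alpha}+i\Omega_{B_\alpha}\succeq0$.

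\textbf{Step 2: the covariance identity.} We show that for every $x\in\mathbb R^{2m}$,
\begin{equation}
 x^TVx=\int x^TV_\phi x\,\nu(d\Pi)+2\int\bigl(x\cdot(m(\Pi)-\bar m)\bigr)^2\nu(d\Pi),
\end{equation}
where $\bar m=\int m\,d\nu$ is the mean of $\rho$. This reduces to computing $\Tr\rho\,(x\cdot R)^2$ as $\int\|(x\cdot R)\phi\|^2d\nu$. The same monotone-truncation argument as in Step~1 applies to the single self-adjoint operator $x\cdot R$: it is essentially self-adjoint on the Schwartz core, and its spectral measure in $\rho$ is the $\nu$-average of its spectral measures in the components. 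It then suffices to expand $\|(x\cdot R-x\cdot m)\phi\|^2$ and integrate.

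Positivity and integrability make each $W_{B_\alpha}$ a well-defined real symmetric matrix. Positive cones are closed under integration, so $W_{B_\alpha}+i\Omega_{B_\alpha}\succeq0$ survives the averaging. The second term in the identity is $\succeq0$, which yields Eq.~\eqref{eq:app-cov-domination}.

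\textbf{Main obstacle.} The delicate point is Step~2's interchange of the unbounded quadratic expectation with the Bochner integral. The ensemble only reproduces bounded observables, so I would pass through the spectral measures of $x\cdot R$: for bounded Borel $g$,
\begin{equation}
 \Tr\rho\,g(x\cdot R)=\int\langle\phi|g(x\cdot R)|\phi\rangle\,d\nu.
\end{equation}
Monotone limits of $g$ then give the second moment and, by dominated convergence, the first moment. The symmetric form of the covariance then follows by polarization in $x$.
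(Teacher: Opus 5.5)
Your proposal is correct and takes essentially the same route as the paper. Both use the continuous product ensemble, monotone spectral truncation to disintegrate second moments, dominated convergence for first moments, block diagonality of each component covariance from the product structure, Lemma~\ref{lem:weak-CCR} for local physicality, and the law of total covariance giving $V-\bigoplus_\alpha W_{B_\alpha}=2\operatorname{Cov}_\nu[m(\Pi)]\succeq0$. The only cosmetic difference is how the single full-measure set is secured: the paper restricts to the finite polarization set $\mathcal X$, whereas you take it from the coordinate domains $\bigcap_a\Dom R_a$ and then treat every $x\cdot R$ directly, which is equally valid.
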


\begin{proof}
Use the continuous product ensemble \eqref{eq:single-pi-ensemble}. Put $d=2m$ and consider the finite polarization set
\begin{equation}
 \mathcal X=\{e_j:1\le j\le d\}\cup
 \{e_j+e_k:1\le j<k\le d\}.
 \label{eq:polarization-set}
\end{equation}
For $x\in\mathcal X$, let $X_x=x^TR$ and $h_M(t)=\min\{t^2,M\}$. Since $h_M(X_x)$ is bounded and positive, Eq.~\eqref{eq:single-pi-ensemble} gives
\begin{equation}
 \Tr[\rho h_M(X_x)]
 =\int\Tr[\Pi h_M(X_x)]\,\nu(d\Pi).
\end{equation}
Monotone convergence and the finite second moment of $\rho$ imply
\begin{equation}
 \Tr(\rho X_x^2)
 =\int\langle X_x^2\rangle_\Pi\,\nu(d\Pi)<\infty.
 \label{eq:second-moment-disintegration}
\end{equation}
Because $\mathcal X$ is finite, there is one full-measure set $E\subseteq\Pp_\pi$ on which all these second moments are finite. In particular every representative vector of every $\Pi\in E$ lies in the domain of each $R_j$.

For $\Pi\in E$, define $m_j(\Pi)=\langle R_j\rangle_\Pi$ and the centered directional form
\begin{equation}
 Q_\Pi(x)=2\bigl(\langle X_x^2\rangle_\Pi-
 \langle X_x\rangle_\Pi^2\bigr)=x^TV(\Pi)x.
 \label{eq:directional-covariance-form}
\end{equation}
Its entries are recovered from the finite set \eqref{eq:polarization-set}, e.g.
\begin{align}
 V_{jj}(\Pi)&=Q_\Pi(e_j),\notag\\
 2V_{jk}(\Pi)&=Q_\Pi(e_j+e_k)-Q_\Pi(e_j)-Q_\Pi(e_k).
 \label{eq:covariance-polarization}
\end{align}
These functions of $\Pi$ are measurable because the unbounded moments above are monotone limits of bounded spectral truncations. Lemma~\ref{lem:weak-CCR} applies pointwise and yields
\begin{equation}
 V(\Pi)+i\Omega\succeq0.
 \label{eq:component-uncertainty}
\end{equation}

Product structure forces $V(\Pi)$ to be block diagonal across $\pi$. Indeed, for quadratures $X$ and $Y$ supported on distinct product blocks, the inner product $\langle X\psi_\Pi,Y\psi_\Pi\rangle$ factorizes as $\langle X\rangle_\Pi\langle Y\rangle_\Pi$, so the centered cross term in $Q_\Pi(X+Y)$ vanishes. Consequently
\begin{equation}
 V(\Pi)=\bigoplus_{\alpha=1}^r V_{B_\alpha}(\Pi),
 \qquad
 V_{B_\alpha}(\Pi)+i\Omega_{B_\alpha}\succeq0.
 \label{eq:component-block-covariance}
\end{equation}

It remains to average these pointwise covariances. Cauchy--Schwarz and Eq.~\eqref{eq:second-moment-disintegration} give $m_x(\Pi)=x^Tm(\Pi)\in L^2(\nu)$ for all $x\in\mathcal X$. Introduce the bounded clipping function $g_M(t)=\max\{-M,\min\{t,M\}\}$. Pointwise spectral convergence gives $\langle g_M(X_x)\rangle_\Pi\to m_x(\Pi)$, while
\begin{equation}
 |\langle g_M(X_x)\rangle_\Pi|
 \le \langle |X_x|\rangle_\Pi
 \le \langle X_x^2\rangle_\Pi^{1/2}.
\end{equation}
The last quantity is $L^1(\nu)$ because Eq.~\eqref{eq:second-moment-disintegration} and Cauchy--Schwarz give $\int\langle X_x^2\rangle_\Pi^{1/2}d\nu\le[\Tr(\rho X_x^2)]^{1/2}$. Dominated convergence applied to Eq.~\eqref{eq:single-pi-ensemble} therefore gives
\begin{equation}
 \langle X_x\rangle_\rho=\int m_x(\Pi)\,\nu(d\Pi).
 \label{eq:first-moment-disintegration}
\end{equation}
Combining this with Eq.~\eqref{eq:second-moment-disintegration} gives, for $x\in\mathcal X$,
\begin{equation}
 x^TVx=
 \int x^TV(\Pi)x\,\nu(d\Pi)
 +2\operatorname{Var}_\nu[m_x(\Pi)].
 \label{eq:directional-total-covariance}
\end{equation}
Both sides are quadratic forms, and the polarization set determines every matrix entry; hence
\begin{equation}
 V=\int V(\Pi)\,\nu(d\Pi)
 +2\operatorname{Cov}_\nu[m(\Pi)].
 \label{eq:total-covariance}
\end{equation}
The factor two is exactly the convention in Eq.~\eqref{eq:covariance-convention}.

Finally set
\begin{equation}
 W_{B_\alpha}=\int V_{B_\alpha}(\Pi)\,\nu(d\Pi).
 \label{eq:averaged-local-covariance}
\end{equation}
The diagonal entries are integrable by Eq.~\eqref{eq:second-moment-disintegration}; off-diagonal entries are controlled by polarization (or Cauchy--Schwarz for the centered form). Integrating Eq.~\eqref{eq:component-block-covariance} yields Eq.~\eqref{eq:app-local-physical-W}, while Eqs.~\eqref{eq:component-block-covariance} and \eqref{eq:total-covariance} yield
\begin{equation}
 V-\bigoplus_{\alpha=1}^rW_{B_\alpha}
 =2\operatorname{Cov}_\nu[m(\Pi)]\succeq0,
\end{equation}
which is Eq.~\eqref{eq:app-cov-domination}.
\end{proof}

For Gaussian $\rho_G(V)$, the converse is the standard correlated-displacement construction: if $W=\bigoplus_\alpha W_{B_\alpha}\preceq V$, take a product Gaussian seed with covariance $W$ and average its phase-space translates against a centered classical Gaussian distribution with covariance $(V-W)/2$. The resulting Gaussian state has covariance $V$ and is separable across $\pi$; this is the direct multipartition version of the converse construction in Proposition~1 of Werner and Wolf \cite{WernerWolf2001}.

\bibliography{manuscript}

\begin{thebibliography}{36}%
\makeatletter
\providecommand \@ifxundefined [1]{%
 \@ifx{#1\undefined}
}%
\providecommand \@ifnum [1]{%
 \ifnum #1\expandafter \@firstoftwo
 \else \expandafter \@secondoftwo
 \fi
}%
\providecommand \@ifx [1]{%
 \ifx #1\expandafter \@firstoftwo
 \else \expandafter \@secondoftwo
 \fi
}%
\providecommand \natexlab [1]{#1}%
\providecommand \enquote  [1]{``#1''}%
\providecommand \bibnamefont  [1]{#1}%
\providecommand \bibfnamefont [1]{#1}%
\providecommand \citenamefont [1]{#1}%
\providecommand \href@noop [0]{\@secondoftwo}%
\providecommand \href [0]{\begingroup \@sanitize@url \@href}%
\providecommand \@href[1]{\@@startlink{#1}\@@href}%
\providecommand \@@href[1]{\endgroup#1\@@endlink}%
\providecommand \@sanitize@url [0]{\catcode `\\12\catcode `\$12\catcode
  `\&12\catcode `\#12\catcode `\^12\catcode `\_12\catcode `\%12\relax}%
\providecommand \@@startlink[1]{}%
\providecommand \@@endlink[0]{}%
\providecommand \url  [0]{\begingroup\@sanitize@url \@url }%
\providecommand \@url [1]{\endgroup\@href {#1}{\urlprefix }}%
\providecommand \urlprefix  [0]{URL }%
\providecommand \Eprint [0]{\href }%
\providecommand \doibase [0]{https://doi.org/}%
\providecommand \selectlanguage [0]{\@gobble}%
\providecommand \bibinfo  [0]{\@secondoftwo}%
\providecommand \bibfield  [0]{\@secondoftwo}%
\providecommand \translation [1]{[#1]}%
\providecommand \BibitemOpen [0]{}%
\providecommand \bibitemStop [0]{}%
\providecommand \bibitemNoStop [0]{.\EOS\space}%
\providecommand \EOS [0]{\spacefactor3000\relax}%
\providecommand \BibitemShut  [1]{\csname bibitem#1\endcsname}%
\let\auto@bib@innerbib\@empty
\bibitem [{\citenamefont {D{\"u}r}\ \emph {et~al.}(1999)\citenamefont
  {D{\"u}r}, \citenamefont {Cirac},\ and\ \citenamefont
  {Tarrach}}]{DurCiracTarrach1999}%
  \BibitemOpen
  \bibfield  {author} {\bibinfo {author} {\bibfnamefont {W.}~\bibnamefont
  {D{\"u}r}}, \bibinfo {author} {\bibfnamefont {J.~I.}\ \bibnamefont {Cirac}},\
  and\ \bibinfo {author} {\bibfnamefont {R.}~\bibnamefont {Tarrach}},\
  }\bibfield  {title} {\bibinfo {title} {Separability and distillability of
  multiparticle quantum systems},\ }\href
  {https://doi.org/10.1103/PhysRevLett.83.3562} {\bibfield  {journal} {\bibinfo
   {journal} {Phys. Rev. Lett.}\ }\textbf {\bibinfo {volume} {83}},\ \bibinfo
  {pages} {3562} (\bibinfo {year} {1999})}\BibitemShut {NoStop}%
\bibitem [{\citenamefont {Horodecki}\ \emph {et~al.}(2009)\citenamefont
  {Horodecki}, \citenamefont {Horodecki}, \citenamefont {Horodecki},\ and\
  \citenamefont {Horodecki}}]{Horodecki2009}%
  \BibitemOpen
  \bibfield  {author} {\bibinfo {author} {\bibfnamefont {R.}~\bibnamefont
  {Horodecki}}, \bibinfo {author} {\bibfnamefont {P.}~\bibnamefont
  {Horodecki}}, \bibinfo {author} {\bibfnamefont {M.}~\bibnamefont
  {Horodecki}},\ and\ \bibinfo {author} {\bibfnamefont {K.}~\bibnamefont
  {Horodecki}},\ }\bibfield  {title} {\bibinfo {title} {Quantum entanglement},\
  }\href {https://doi.org/10.1103/RevModPhys.81.865} {\bibfield  {journal}
  {\bibinfo  {journal} {Rev. Mod. Phys.}\ }\textbf {\bibinfo {volume} {81}},\
  \bibinfo {pages} {865} (\bibinfo {year} {2009})}\BibitemShut {NoStop}%
\bibitem [{\citenamefont {G{\"u}hne}\ and\ \citenamefont
  {T{\'o}th}(2009)}]{GuhneToth2009}%
  \BibitemOpen
  \bibfield  {author} {\bibinfo {author} {\bibfnamefont {O.}~\bibnamefont
  {G{\"u}hne}}\ and\ \bibinfo {author} {\bibfnamefont {G.}~\bibnamefont
  {T{\'o}th}},\ }\bibfield  {title} {\bibinfo {title} {Entanglement
  detection},\ }\href {https://doi.org/10.1016/j.physrep.2009.02.004}
  {\bibfield  {journal} {\bibinfo  {journal} {Phys. Rep.}\ }\textbf {\bibinfo
  {volume} {474}},\ \bibinfo {pages} {1} (\bibinfo {year} {2009})}\BibitemShut
  {NoStop}%
\bibitem [{\citenamefont {Gabriel}\ \emph {et~al.}(2010)\citenamefont
  {Gabriel}, \citenamefont {Hiesmayr},\ and\ \citenamefont
  {Huber}}]{Gabriel2010}%
  \BibitemOpen
  \bibfield  {author} {\bibinfo {author} {\bibfnamefont {A.}~\bibnamefont
  {Gabriel}}, \bibinfo {author} {\bibfnamefont {B.~C.}\ \bibnamefont
  {Hiesmayr}},\ and\ \bibinfo {author} {\bibfnamefont {M.}~\bibnamefont
  {Huber}},\ }\bibfield  {title} {\bibinfo {title} {Criterion for
  {$k$}-separability in mixed multipartite systems},\ }\href
  {https://doi.org/10.26421/QIC10.9-10-8} {\bibfield  {journal} {\bibinfo
  {journal} {Quantum Inf. Comput.}\ }\textbf {\bibinfo {volume} {10}},\
  \bibinfo {pages} {829} (\bibinfo {year} {2010})}\BibitemShut {NoStop}%
\bibitem [{\citenamefont {G{\"u}hne}\ \emph {et~al.}(2005)\citenamefont
  {G{\"u}hne}, \citenamefont {T{\'o}th},\ and\ \citenamefont
  {Briegel}}]{GuhneTothBriegel2005}%
  \BibitemOpen
  \bibfield  {author} {\bibinfo {author} {\bibfnamefont {O.}~\bibnamefont
  {G{\"u}hne}}, \bibinfo {author} {\bibfnamefont {G.}~\bibnamefont
  {T{\'o}th}},\ and\ \bibinfo {author} {\bibfnamefont {H.~J.}\ \bibnamefont
  {Briegel}},\ }\bibfield  {title} {\bibinfo {title} {Multipartite entanglement
  in spin chains},\ }\href {https://doi.org/10.1088/1367-2630/7/1/229}
  {\bibfield  {journal} {\bibinfo  {journal} {New J. Phys.}\ }\textbf {\bibinfo
  {volume} {7}},\ \bibinfo {pages} {229} (\bibinfo {year} {2005})}\BibitemShut
  {NoStop}%
\bibitem [{\citenamefont {S{\o}rensen}\ and\ \citenamefont
  {M{\o}lmer}(2001)}]{SorensenMolmer2001}%
  \BibitemOpen
  \bibfield  {author} {\bibinfo {author} {\bibfnamefont {A.~S.}\ \bibnamefont
  {S{\o}rensen}}\ and\ \bibinfo {author} {\bibfnamefont {K.}~\bibnamefont
  {M{\o}lmer}},\ }\bibfield  {title} {\bibinfo {title} {Entanglement and
  extreme spin squeezing},\ }\href
  {https://doi.org/10.1103/PhysRevLett.86.4431} {\bibfield  {journal} {\bibinfo
   {journal} {Phys. Rev. Lett.}\ }\textbf {\bibinfo {volume} {86}},\ \bibinfo
  {pages} {4431} (\bibinfo {year} {2001})}\BibitemShut {NoStop}%
\bibitem [{\citenamefont {Lu}\ \emph {et~al.}(2018)\citenamefont {Lu},
  \citenamefont {Zhao}, \citenamefont {Li}, \citenamefont {Yin}, \citenamefont
  {Yuan}, \citenamefont {Hung}, \citenamefont {Chen}, \citenamefont {Li},
  \citenamefont {Liu}, \citenamefont {Peng}, \citenamefont {Liang},
  \citenamefont {Ma}, \citenamefont {Chen},\ and\ \citenamefont
  {Pan}}]{Lu2018}%
  \BibitemOpen
  \bibfield  {author} {\bibinfo {author} {\bibfnamefont {H.}~\bibnamefont
  {Lu}}, \bibinfo {author} {\bibfnamefont {Q.}~\bibnamefont {Zhao}}, \bibinfo
  {author} {\bibfnamefont {Z.-D.}\ \bibnamefont {Li}}, \bibinfo {author}
  {\bibfnamefont {X.-F.}\ \bibnamefont {Yin}}, \bibinfo {author} {\bibfnamefont
  {X.}~\bibnamefont {Yuan}}, \bibinfo {author} {\bibfnamefont {J.-C.}\
  \bibnamefont {Hung}}, \bibinfo {author} {\bibfnamefont {L.-K.}\ \bibnamefont
  {Chen}}, \bibinfo {author} {\bibfnamefont {L.}~\bibnamefont {Li}}, \bibinfo
  {author} {\bibfnamefont {N.-L.}\ \bibnamefont {Liu}}, \bibinfo {author}
  {\bibfnamefont {C.-Z.}\ \bibnamefont {Peng}}, \bibinfo {author}
  {\bibfnamefont {Y.-C.}\ \bibnamefont {Liang}}, \bibinfo {author}
  {\bibfnamefont {X.}~\bibnamefont {Ma}}, \bibinfo {author} {\bibfnamefont
  {Y.-A.}\ \bibnamefont {Chen}},\ and\ \bibinfo {author} {\bibfnamefont
  {J.-W.}\ \bibnamefont {Pan}},\ }\bibfield  {title} {\bibinfo {title}
  {Entanglement structure: Entanglement partitioning in multipartite systems
  and its experimental detection using optimizable witnesses},\ }\href
  {https://doi.org/10.1103/PhysRevX.8.021072} {\bibfield  {journal} {\bibinfo
  {journal} {Phys. Rev. X}\ }\textbf {\bibinfo {volume} {8}},\ \bibinfo {pages}
  {021072} (\bibinfo {year} {2018})}\BibitemShut {NoStop}%
\bibitem [{\citenamefont {Weedbrook}\ \emph {et~al.}(2012)\citenamefont
  {Weedbrook}, \citenamefont {Pirandola}, \citenamefont
  {Garc{\'i}a-Patr{\'o}n}, \citenamefont {Cerf}, \citenamefont {Ralph},
  \citenamefont {Shapiro},\ and\ \citenamefont {Lloyd}}]{Weedbrook2012}%
  \BibitemOpen
  \bibfield  {author} {\bibinfo {author} {\bibfnamefont {C.}~\bibnamefont
  {Weedbrook}}, \bibinfo {author} {\bibfnamefont {S.}~\bibnamefont
  {Pirandola}}, \bibinfo {author} {\bibfnamefont {R.}~\bibnamefont
  {Garc{\'i}a-Patr{\'o}n}}, \bibinfo {author} {\bibfnamefont {N.~J.}\
  \bibnamefont {Cerf}}, \bibinfo {author} {\bibfnamefont {T.~C.}\ \bibnamefont
  {Ralph}}, \bibinfo {author} {\bibfnamefont {J.~H.}\ \bibnamefont {Shapiro}},\
  and\ \bibinfo {author} {\bibfnamefont {S.}~\bibnamefont {Lloyd}},\ }\bibfield
   {title} {\bibinfo {title} {Gaussian quantum information},\ }\href
  {https://doi.org/10.1103/RevModPhys.84.621} {\bibfield  {journal} {\bibinfo
  {journal} {Rev. Mod. Phys.}\ }\textbf {\bibinfo {volume} {84}},\ \bibinfo
  {pages} {621} (\bibinfo {year} {2012})}\BibitemShut {NoStop}%
\bibitem [{\citenamefont {Walschaers}(2021)}]{Walschaers2021}%
  \BibitemOpen
  \bibfield  {author} {\bibinfo {author} {\bibfnamefont {M.}~\bibnamefont
  {Walschaers}},\ }\bibfield  {title} {\bibinfo {title} {Non-gaussian quantum
  states and where to find them},\ }\href
  {https://doi.org/10.1103/PRXQuantum.2.030204} {\bibfield  {journal} {\bibinfo
   {journal} {PRX Quantum}\ }\textbf {\bibinfo {volume} {2}},\ \bibinfo {pages}
  {030204} (\bibinfo {year} {2021})}\BibitemShut {NoStop}%
\bibitem [{\citenamefont {Serafini}(2017)}]{Serafini2017}%
  \BibitemOpen
  \bibfield  {author} {\bibinfo {author} {\bibfnamefont {A.}~\bibnamefont
  {Serafini}},\ }\href@noop {} {\emph {\bibinfo {title} {Quantum Continuous
  Variables: A Primer of Theoretical Methods}}}\ (\bibinfo  {publisher} {CRC
  Press},\ \bibinfo {address} {Boca Raton},\ \bibinfo {year}
  {2017})\BibitemShut {NoStop}%
\bibitem [{\citenamefont {Werner}\ and\ \citenamefont
  {Wolf}(2001)}]{WernerWolf2001}%
  \BibitemOpen
  \bibfield  {author} {\bibinfo {author} {\bibfnamefont {R.~F.}\ \bibnamefont
  {Werner}}\ and\ \bibinfo {author} {\bibfnamefont {M.~M.}\ \bibnamefont
  {Wolf}},\ }\bibfield  {title} {\bibinfo {title} {Bound entangled gaussian
  states},\ }\href {https://doi.org/10.1103/PhysRevLett.86.3658} {\bibfield
  {journal} {\bibinfo  {journal} {Phys. Rev. Lett.}\ }\textbf {\bibinfo
  {volume} {86}},\ \bibinfo {pages} {3658} (\bibinfo {year}
  {2001})}\BibitemShut {NoStop}%
\bibitem [{\citenamefont {Giedke}\ \emph
  {et~al.}(2001{\natexlab{a}})\citenamefont {Giedke}, \citenamefont {Kraus},
  \citenamefont {Lewenstein},\ and\ \citenamefont {Cirac}}]{Giedke2001}%
  \BibitemOpen
  \bibfield  {author} {\bibinfo {author} {\bibfnamefont {G.}~\bibnamefont
  {Giedke}}, \bibinfo {author} {\bibfnamefont {B.}~\bibnamefont {Kraus}},
  \bibinfo {author} {\bibfnamefont {M.}~\bibnamefont {Lewenstein}},\ and\
  \bibinfo {author} {\bibfnamefont {J.~I.}\ \bibnamefont {Cirac}},\ }\bibfield
  {title} {\bibinfo {title} {Entanglement criteria for all bipartite gaussian
  states},\ }\href {https://doi.org/10.1103/PhysRevLett.87.167904} {\bibfield
  {journal} {\bibinfo  {journal} {Phys. Rev. Lett.}\ }\textbf {\bibinfo
  {volume} {87}},\ \bibinfo {pages} {167904} (\bibinfo {year}
  {2001}{\natexlab{a}})}\BibitemShut {NoStop}%
\bibitem [{\citenamefont {Giedke}\ \emph
  {et~al.}(2001{\natexlab{b}})\citenamefont {Giedke}, \citenamefont {Kraus},
  \citenamefont {Lewenstein},\ and\ \citenamefont
  {Cirac}}]{GiedkeTripartite2001}%
  \BibitemOpen
  \bibfield  {author} {\bibinfo {author} {\bibfnamefont {G.}~\bibnamefont
  {Giedke}}, \bibinfo {author} {\bibfnamefont {B.}~\bibnamefont {Kraus}},
  \bibinfo {author} {\bibfnamefont {M.}~\bibnamefont {Lewenstein}},\ and\
  \bibinfo {author} {\bibfnamefont {J.~I.}\ \bibnamefont {Cirac}},\ }\bibfield
  {title} {\bibinfo {title} {Separability properties of three-mode gaussian
  states},\ }\href {https://doi.org/10.1103/PhysRevA.64.052303} {\bibfield
  {journal} {\bibinfo  {journal} {Phys. Rev. A}\ }\textbf {\bibinfo {volume}
  {64}},\ \bibinfo {pages} {052303} (\bibinfo {year}
  {2001}{\natexlab{b}})}\BibitemShut {NoStop}%
\bibitem [{\citenamefont {Holevo}\ \emph {et~al.}(2005)\citenamefont {Holevo},
  \citenamefont {Shirokov},\ and\ \citenamefont
  {Werner}}]{HolevoShirokovWerner2005}%
  \BibitemOpen
  \bibfield  {author} {\bibinfo {author} {\bibfnamefont {A.~S.}\ \bibnamefont
  {Holevo}}, \bibinfo {author} {\bibfnamefont {M.~E.}\ \bibnamefont
  {Shirokov}},\ and\ \bibinfo {author} {\bibfnamefont {R.~F.}\ \bibnamefont
  {Werner}},\ }\href@noop {} {\bibinfo {title} {Separability and
  entanglement-breaking in infinite dimensions}} (\bibinfo {year} {2005}),\
  \Eprint {https://arxiv.org/abs/quant-ph/0504204} {arXiv:quant-ph/0504204
  [quant-ph]} \BibitemShut {NoStop}%
\bibitem [{\citenamefont {Hyllus}\ and\ \citenamefont
  {Eisert}(2006)}]{HyllusEisert2006}%
  \BibitemOpen
  \bibfield  {author} {\bibinfo {author} {\bibfnamefont {P.}~\bibnamefont
  {Hyllus}}\ and\ \bibinfo {author} {\bibfnamefont {J.}~\bibnamefont
  {Eisert}},\ }\bibfield  {title} {\bibinfo {title} {Optimal entanglement
  witnesses for continuous-variable systems},\ }\href
  {https://doi.org/10.1088/1367-2630/8/4/051} {\bibfield  {journal} {\bibinfo
  {journal} {New J. Phys.}\ }\textbf {\bibinfo {volume} {8}},\ \bibinfo {pages}
  {51} (\bibinfo {year} {2006})}\BibitemShut {NoStop}%
\bibitem [{\citenamefont {Shchukin}\ and\ \citenamefont {van
  Loock}(2026)}]{ShchukinVanLoock2026}%
  \BibitemOpen
  \bibfield  {author} {\bibinfo {author} {\bibfnamefont {E.}~\bibnamefont
  {Shchukin}}\ and\ \bibinfo {author} {\bibfnamefont {P.}~\bibnamefont {van
  Loock}},\ }\bibfield  {title} {\bibinfo {title} {Revisiting gaussian genuine
  entanglement witnesses with modern software},\ }\href
  {https://doi.org/10.1103/drmr-qf2q} {\bibfield  {journal} {\bibinfo
  {journal} {Phys. Rev. Research}\ }\textbf {\bibinfo {volume} {8}},\ \bibinfo
  {pages} {023124} (\bibinfo {year} {2026})}\BibitemShut {NoStop}%
\bibitem [{\citenamefont {Gerke}\ \emph {et~al.}(2016)\citenamefont {Gerke},
  \citenamefont {Sperling}, \citenamefont {Vogel}, \citenamefont {Cai},
  \citenamefont {Roslund}, \citenamefont {Treps},\ and\ \citenamefont
  {Fabre}}]{Gerke2016}%
  \BibitemOpen
  \bibfield  {author} {\bibinfo {author} {\bibfnamefont {S.}~\bibnamefont
  {Gerke}}, \bibinfo {author} {\bibfnamefont {J.}~\bibnamefont {Sperling}},
  \bibinfo {author} {\bibfnamefont {W.}~\bibnamefont {Vogel}}, \bibinfo
  {author} {\bibfnamefont {Y.}~\bibnamefont {Cai}}, \bibinfo {author}
  {\bibfnamefont {J.}~\bibnamefont {Roslund}}, \bibinfo {author} {\bibfnamefont
  {N.}~\bibnamefont {Treps}},\ and\ \bibinfo {author} {\bibfnamefont
  {C.}~\bibnamefont {Fabre}},\ }\bibfield  {title} {\bibinfo {title}
  {Multipartite entanglement of a two-separable state},\ }\href
  {https://doi.org/10.1103/PhysRevLett.117.110502} {\bibfield  {journal}
  {\bibinfo  {journal} {Phys. Rev. Lett.}\ }\textbf {\bibinfo {volume} {117}},\
  \bibinfo {pages} {110502} (\bibinfo {year} {2016})}\BibitemShut {NoStop}%
\bibitem [{\citenamefont {Baksov{\'a}}\ \emph {et~al.}(2025)\citenamefont
  {Baksov{\'a}}, \citenamefont {Leskovjanov{\'a}}, \citenamefont {Mi{\v{s}}ta},
  \citenamefont {Agudelo},\ and\ \citenamefont {Friis}}]{Baksova2025}%
  \BibitemOpen
  \bibfield  {author} {\bibinfo {author} {\bibfnamefont {K.}~\bibnamefont
  {Baksov{\'a}}}, \bibinfo {author} {\bibfnamefont {O.}~\bibnamefont
  {Leskovjanov{\'a}}}, \bibinfo {author} {\bibfnamefont {L.}~\bibnamefont
  {Mi{\v{s}}ta}, \bibfnamefont {Jr.}}, \bibinfo {author} {\bibfnamefont
  {E.}~\bibnamefont {Agudelo}},\ and\ \bibinfo {author} {\bibfnamefont
  {N.}~\bibnamefont {Friis}},\ }\bibfield  {title} {\bibinfo {title}
  {Multi-copy activation of genuine multipartite entanglement in
  continuous-variable systems},\ }\href
  {https://doi.org/10.22331/q-2025-04-09-1699} {\bibfield  {journal} {\bibinfo
  {journal} {Quantum}\ }\textbf {\bibinfo {volume} {9}},\ \bibinfo {pages}
  {1699} (\bibinfo {year} {2025})}\BibitemShut {NoStop}%
\bibitem [{\citenamefont {Leskovjanov{\'a}}\ and\ \citenamefont
  {Mi{\v{s}}ta}(2025)}]{Leskovjanova2025}%
  \BibitemOpen
  \bibfield  {author} {\bibinfo {author} {\bibfnamefont {O.}~\bibnamefont
  {Leskovjanov{\'a}}}\ and\ \bibinfo {author} {\bibfnamefont {L.}~\bibnamefont
  {Mi{\v{s}}ta}, \bibfnamefont {Jr.}},\ }\bibfield  {title} {\bibinfo {title}
  {Minimal criteria for continuous-variable genuine multipartite
  entanglement},\ }\href {https://doi.org/10.22331/q-2025-08-27-1837}
  {\bibfield  {journal} {\bibinfo  {journal} {Quantum}\ }\textbf {\bibinfo
  {volume} {9}},\ \bibinfo {pages} {1837} (\bibinfo {year} {2025})}\BibitemShut
  {NoStop}%
\bibitem [{\citenamefont {Leskovjanov{\'a}}\ \emph {et~al.}(2026)\citenamefont
  {Leskovjanov{\'a}}, \citenamefont {Baksov{\'a}}, \citenamefont
  {Provazn{\'i}k}, \citenamefont {Mi{\v{s}}ta},\ and\ \citenamefont
  {Friis}}]{Leskovjanova2026}%
  \BibitemOpen
  \bibfield  {author} {\bibinfo {author} {\bibfnamefont {O.}~\bibnamefont
  {Leskovjanov{\'a}}}, \bibinfo {author} {\bibfnamefont {K.}~\bibnamefont
  {Baksov{\'a}}}, \bibinfo {author} {\bibfnamefont {J.}~\bibnamefont
  {Provazn{\'i}k}}, \bibinfo {author} {\bibfnamefont {L.}~\bibnamefont
  {Mi{\v{s}}ta}, \bibfnamefont {Jr.}},\ and\ \bibinfo {author} {\bibfnamefont
  {N.}~\bibnamefont {Friis}},\ }\href@noop {} {\bibinfo {title} {On the
  existence of fully inseparable biseparable gaussian states}} (\bibinfo {year}
  {2026}),\ \Eprint {https://arxiv.org/abs/2605.28404} {arXiv:2605.28404
  [quant-ph]} \BibitemShut {NoStop}%
\bibitem [{\citenamefont {Seshadreesan}\ \emph {et~al.}(2018)\citenamefont
  {Seshadreesan}, \citenamefont {Lami},\ and\ \citenamefont
  {Wilde}}]{SeshadreesanLamiWilde2018}%
  \BibitemOpen
  \bibfield  {author} {\bibinfo {author} {\bibfnamefont {K.~P.}\ \bibnamefont
  {Seshadreesan}}, \bibinfo {author} {\bibfnamefont {L.}~\bibnamefont {Lami}},\
  and\ \bibinfo {author} {\bibfnamefont {M.~M.}\ \bibnamefont {Wilde}},\
  }\bibfield  {title} {\bibinfo {title} {R{\'e}nyi relative entropies of
  quantum gaussian states},\ }\href {https://doi.org/10.1063/1.5007167}
  {\bibfield  {journal} {\bibinfo  {journal} {J. Math. Phys.}\ }\textbf
  {\bibinfo {volume} {59}},\ \bibinfo {pages} {072204} (\bibinfo {year}
  {2018})}\BibitemShut {NoStop}%
\bibitem [{\citenamefont {Werner}(1989)}]{Werner1989}%
  \BibitemOpen
  \bibfield  {author} {\bibinfo {author} {\bibfnamefont {R.~F.}\ \bibnamefont
  {Werner}},\ }\bibfield  {title} {\bibinfo {title} {Quantum states with
  einstein--podolsky--rosen correlations admitting a hidden-variable model},\
  }\href {https://doi.org/10.1103/PhysRevA.40.4277} {\bibfield  {journal}
  {\bibinfo  {journal} {Phys. Rev. A}\ }\textbf {\bibinfo {volume} {40}},\
  \bibinfo {pages} {4277} (\bibinfo {year} {1989})}\BibitemShut {NoStop}%
\bibitem [{\citenamefont {Robertson}(1929)}]{Robertson1929}%
  \BibitemOpen
  \bibfield  {author} {\bibinfo {author} {\bibfnamefont {H.~P.}\ \bibnamefont
  {Robertson}},\ }\bibfield  {title} {\bibinfo {title} {The uncertainty
  principle},\ }\href {https://doi.org/10.1103/PhysRev.34.163} {\bibfield
  {journal} {\bibinfo  {journal} {Phys. Rev.}\ }\textbf {\bibinfo {volume}
  {34}},\ \bibinfo {pages} {163} (\bibinfo {year} {1929})}\BibitemShut
  {NoStop}%
\bibitem [{\citenamefont {Schr{\"o}dinger}(1930)}]{Schrodinger1930}%
  \BibitemOpen
  \bibfield  {author} {\bibinfo {author} {\bibfnamefont {E.}~\bibnamefont
  {Schr{\"o}dinger}},\ }\bibfield  {title} {\bibinfo {title} {Zum
  heisenbergschen unsch{\"a}rfeprinzip},\ }\href@noop {} {\bibfield  {journal}
  {\bibinfo  {journal} {Sitzungsber. Preuss. Akad. Wiss., Phys.-Math. Kl.}\ ,\
  \bibinfo {pages} {296}} (\bibinfo {year} {1930})}\BibitemShut {NoStop}%
\bibitem [{\citenamefont {Douglas}(1966)}]{Douglas1966}%
  \BibitemOpen
  \bibfield  {author} {\bibinfo {author} {\bibfnamefont {R.~G.}\ \bibnamefont
  {Douglas}},\ }\bibfield  {title} {\bibinfo {title} {On majorization,
  factorization, and range inclusion of operators on hilbert space},\ }\href
  {https://doi.org/10.1090/S0002-9939-1966-0203464-1} {\bibfield  {journal}
  {\bibinfo  {journal} {Proc. Am. Math. Soc.}\ }\textbf {\bibinfo {volume}
  {17}},\ \bibinfo {pages} {413} (\bibinfo {year} {1966})}\BibitemShut
  {NoStop}%
\bibitem [{\citenamefont {Glauber}(1963)}]{Glauber1963}%
  \BibitemOpen
  \bibfield  {author} {\bibinfo {author} {\bibfnamefont {R.~J.}\ \bibnamefont
  {Glauber}},\ }\bibfield  {title} {\bibinfo {title} {Coherent and incoherent
  states of the radiation field},\ }\href
  {https://doi.org/10.1103/PhysRev.131.2766} {\bibfield  {journal} {\bibinfo
  {journal} {Phys. Rev.}\ }\textbf {\bibinfo {volume} {131}},\ \bibinfo {pages}
  {2766} (\bibinfo {year} {1963})}\BibitemShut {NoStop}%
\bibitem [{\citenamefont {Husimi}(1940)}]{Husimi1940}%
  \BibitemOpen
  \bibfield  {author} {\bibinfo {author} {\bibfnamefont {K.}~\bibnamefont
  {Husimi}},\ }\bibfield  {title} {\bibinfo {title} {Some formal properties of
  the density matrix},\ }\href {https://doi.org/10.11429/ppmsj1919.22.4_264}
  {\bibfield  {journal} {\bibinfo  {journal} {Proc. Phys.-Math. Soc. Jpn.}\
  }\textbf {\bibinfo {volume} {22}},\ \bibinfo {pages} {264} (\bibinfo {year}
  {1940})}\BibitemShut {NoStop}%
\bibitem [{\citenamefont {Chabaud}\ and\ \citenamefont
  {Mehraban}(2022)}]{ChabaudMehraban2022}%
  \BibitemOpen
  \bibfield  {author} {\bibinfo {author} {\bibfnamefont {U.}~\bibnamefont
  {Chabaud}}\ and\ \bibinfo {author} {\bibfnamefont {S.}~\bibnamefont
  {Mehraban}},\ }\bibfield  {title} {\bibinfo {title} {Holomorphic
  representation of quantum computations},\ }\href
  {https://doi.org/10.22331/q-2022-10-06-831} {\bibfield  {journal} {\bibinfo
  {journal} {Quantum}\ }\textbf {\bibinfo {volume} {6}},\ \bibinfo {pages}
  {831} (\bibinfo {year} {2022})}\BibitemShut {NoStop}%
\bibitem [{\citenamefont {Simon}\ \emph {et~al.}(1994)\citenamefont {Simon},
  \citenamefont {Mukunda},\ and\ \citenamefont
  {Dutta}}]{SimonMukundaDutta1994}%
  \BibitemOpen
  \bibfield  {author} {\bibinfo {author} {\bibfnamefont {R.}~\bibnamefont
  {Simon}}, \bibinfo {author} {\bibfnamefont {N.}~\bibnamefont {Mukunda}},\
  and\ \bibinfo {author} {\bibfnamefont {B.}~\bibnamefont {Dutta}},\ }\bibfield
   {title} {\bibinfo {title} {Quantum-noise matrix for multimode systems:
  {$U(n)$} invariance, squeezing, and normal forms},\ }\href
  {https://doi.org/10.1103/PhysRevA.49.1567} {\bibfield  {journal} {\bibinfo
  {journal} {Phys. Rev. A}\ }\textbf {\bibinfo {volume} {49}},\ \bibinfo
  {pages} {1567} (\bibinfo {year} {1994})}\BibitemShut {NoStop}%
\bibitem [{\citenamefont {Palazuelos}\ and\ \citenamefont
  {de~Vicente}(2022)}]{PalazuelosDeVicente2022}%
  \BibitemOpen
  \bibfield  {author} {\bibinfo {author} {\bibfnamefont {C.}~\bibnamefont
  {Palazuelos}}\ and\ \bibinfo {author} {\bibfnamefont {J.~I.}\ \bibnamefont
  {de~Vicente}},\ }\bibfield  {title} {\bibinfo {title} {Genuine multipartite
  entanglement of quantum states in the multiple-copy scenario},\ }\href
  {https://doi.org/10.22331/q-2022-06-13-735} {\bibfield  {journal} {\bibinfo
  {journal} {Quantum}\ }\textbf {\bibinfo {volume} {6}},\ \bibinfo {pages}
  {735} (\bibinfo {year} {2022})}\BibitemShut {NoStop}%
\bibitem [{\citenamefont {Yamasaki}\ \emph {et~al.}(2022)\citenamefont
  {Yamasaki}, \citenamefont {Morelli}, \citenamefont {Miethlinger},
  \citenamefont {Bavaresco}, \citenamefont {Friis},\ and\ \citenamefont
  {Huber}}]{Yamasaki2022}%
  \BibitemOpen
  \bibfield  {author} {\bibinfo {author} {\bibfnamefont {H.}~\bibnamefont
  {Yamasaki}}, \bibinfo {author} {\bibfnamefont {S.}~\bibnamefont {Morelli}},
  \bibinfo {author} {\bibfnamefont {M.}~\bibnamefont {Miethlinger}}, \bibinfo
  {author} {\bibfnamefont {J.}~\bibnamefont {Bavaresco}}, \bibinfo {author}
  {\bibfnamefont {N.}~\bibnamefont {Friis}},\ and\ \bibinfo {author}
  {\bibfnamefont {M.}~\bibnamefont {Huber}},\ }\bibfield  {title} {\bibinfo
  {title} {Activation of genuine multipartite entanglement: Beyond the
  single-copy paradigm of entanglement characterisation},\ }\href
  {https://doi.org/10.22331/q-2022-04-25-695} {\bibfield  {journal} {\bibinfo
  {journal} {Quantum}\ }\textbf {\bibinfo {volume} {6}},\ \bibinfo {pages}
  {695} (\bibinfo {year} {2022})}\BibitemShut {NoStop}%
\bibitem [{\citenamefont {St{\'a}rek}\ \emph {et~al.}(2026)\citenamefont
  {St{\'a}rek}, \citenamefont {Gollerthan}, \citenamefont {Leskovjanov{\'a}},
  \citenamefont {Meth}, \citenamefont {Tirler}, \citenamefont {Friis},
  \citenamefont {Ringbauer},\ and\ \citenamefont {Mi{\v{s}}ta}}]{Starek2026}%
  \BibitemOpen
  \bibfield  {author} {\bibinfo {author} {\bibfnamefont {R.}~\bibnamefont
  {St{\'a}rek}}, \bibinfo {author} {\bibfnamefont {T.}~\bibnamefont
  {Gollerthan}}, \bibinfo {author} {\bibfnamefont {O.}~\bibnamefont
  {Leskovjanov{\'a}}}, \bibinfo {author} {\bibfnamefont {M.}~\bibnamefont
  {Meth}}, \bibinfo {author} {\bibfnamefont {P.}~\bibnamefont {Tirler}},
  \bibinfo {author} {\bibfnamefont {N.}~\bibnamefont {Friis}}, \bibinfo
  {author} {\bibfnamefont {M.}~\bibnamefont {Ringbauer}},\ and\ \bibinfo
  {author} {\bibfnamefont {L.}~\bibnamefont {Mi{\v{s}}ta}, \bibfnamefont
  {Jr.}},\ }\bibfield  {title} {\bibinfo {title} {Experimental verification of
  multicopy activation of genuine multipartite entanglement},\ }\href
  {https://doi.org/10.1103/kv4s-tfc6} {\bibfield  {journal} {\bibinfo
  {journal} {Phys. Rev. Lett.}\ }\textbf {\bibinfo {volume} {136}},\ \bibinfo
  {pages} {160201} (\bibinfo {year} {2026})}\BibitemShut {NoStop}%
\bibitem [{\citenamefont {Simon}(2000)}]{Simon2000}%
  \BibitemOpen
  \bibfield  {author} {\bibinfo {author} {\bibfnamefont {R.}~\bibnamefont
  {Simon}},\ }\bibfield  {title} {\bibinfo {title} {Peres--horodecki
  separability criterion for continuous variable systems},\ }\href
  {https://doi.org/10.1103/PhysRevLett.84.2726} {\bibfield  {journal} {\bibinfo
   {journal} {Phys. Rev. Lett.}\ }\textbf {\bibinfo {volume} {84}},\ \bibinfo
  {pages} {2726} (\bibinfo {year} {2000})}\BibitemShut {NoStop}%
\bibitem [{\citenamefont {Peres}(1996)}]{Peres1996}%
  \BibitemOpen
  \bibfield  {author} {\bibinfo {author} {\bibfnamefont {A.}~\bibnamefont
  {Peres}},\ }\bibfield  {title} {\bibinfo {title} {Separability criterion for
  density matrices},\ }\href {https://doi.org/10.1103/PhysRevLett.77.1413}
  {\bibfield  {journal} {\bibinfo  {journal} {Phys. Rev. Lett.}\ }\textbf
  {\bibinfo {volume} {77}},\ \bibinfo {pages} {1413} (\bibinfo {year}
  {1996})}\BibitemShut {NoStop}%
\bibitem [{\citenamefont {von Neumann}(1931)}]{vonNeumann1931}%
  \BibitemOpen
  \bibfield  {author} {\bibinfo {author} {\bibfnamefont {J.}~\bibnamefont {von
  Neumann}},\ }\bibfield  {title} {\bibinfo {title} {Die eindeutigkeit der
  schr{\"o}dingerschen operatoren},\ }\href
  {https://doi.org/10.1007/BF01457956} {\bibfield  {journal} {\bibinfo
  {journal} {Math. Ann.}\ }\textbf {\bibinfo {volume} {104}},\ \bibinfo {pages}
  {570} (\bibinfo {year} {1931})}\BibitemShut {NoStop}%
\bibitem [{\citenamefont {Folland}(1989)}]{Folland1989}%
  \BibitemOpen
  \bibfield  {author} {\bibinfo {author} {\bibfnamefont {G.~B.}\ \bibnamefont
  {Folland}},\ }\href@noop {} {\emph {\bibinfo {title} {Harmonic Analysis in
  Phase Space}}}\ (\bibinfo  {publisher} {Princeton University Press},\
  \bibinfo {address} {Princeton},\ \bibinfo {year} {1989})\BibitemShut
  {NoStop}%
\end{thebibliography}%

\end{document}